\pdfoutput=1
\documentclass[a4paper,11pt]{article}
\usepackage{amsmath,amssymb,mathtools,amsthm}
\usepackage{hyperref}
\usepackage{cite}
\usepackage{graphicx}
\usepackage{authblk}
\usepackage{todonotes}
\usepackage{subfigure}
\usepackage{comment}
\usepackage{tikz}
\usepackage[margin=1in]{geometry}
\usepackage[T1]{fontenc}
\usepackage{color}
\usepackage{xhfill}

\usetikzlibrary{calc}
\usetikzlibrary{shapes.geometric, arrows,matrix}
\usetikzlibrary{decorations.markings}
\tikzset{nodeStyle/.style = {circle,draw,minimum size=30pt}}
\tikzset{arrowStyle/.style = {-latex}}

\newcommand{\ket}[1]{\ensuremath{|#1\rangle}}
\newcommand{\bra}[1]{\ensuremath{\langle#1|}}
\newcommand{\ketbra}[2]{\ensuremath{\ket{#1}\bra{#2}}}

\newcommand{\braket}[2]{\ensuremath{\langle{#1}|{#2}\rangle}}

\newcommand{\LL}{\mathcal{L}}
\newcommand{\MM}{\mathcal{M}}

\newcommand{\1}{{\rm 1\hspace{-0.9mm}l}}
\newcommand{\Id}{\1}
\newcommand{\ii}{\mathrm{i}}

\newcommand{\kron}{\otimes}
\newcommand{\con}[1]{\bar{#1}}
\newcommand{\vecc}[1]{ | #1 \rangle \rangle}
\newcommand{\Hl}{\mathcal H}

\newcommand{\C}{\mathbb C}

\newcommand{\ie}{\emph{i.e.\/}}

\newtheorem{theorem}{Theorem}
\newtheorem*{conjecture}{Conjecture}
\newtheorem{lemma}{Lemma}

\newtheorem{proposition}[theorem]{Proposition}
\newtheorem{remark}[theorem]{Remark}
\newtheorem*{property}{Property}

\title{Limiting properties of stochastic quantum walks on 
directed graphs}

\author[1,2]{Adam Glos}
\author[1]{Jaros\l aw Adam Miszczak}
\author[1,2]{Mateusz Ostaszewski\thanks{mostaszewski@iitis.pl}}
\affil[1]{Institute of Theoretical and Applied Informatics, Polish Academy of 
Sciences, \newline Ba{\l}tycka 5, 44-100 Gliwice, Poland}
\affil[2]{Institute of Informatics, Silesian University of Technology\newline
Akademicka 16, 44-100 Gliwice, Poland}

\date{}
\begin{document}
\maketitle

\begin{abstract}
The main results of our work is determining the differences between limiting
properties in various models of quantum stochastic walks. In particular, we prove that in the case of strongly connected and a class of weakly connected directed graphs, local environment interaction evolution is relaxing, and in
the case of undirected graphs, global environment interaction evolution is convergent.
For other classes of directed graphs we show, that the character of connectivity large influence on the limiting
properties. We also study the limiting properties for the non-moralizing global
interaction case. We demonstrate that the digraph observance is recovered in this case.
	
\end{abstract}

\paragraph{Keywords} Quantum stochastic walks, stationary states, relaxing

\section{Introduction \& preliminaries} \subsection{Motivation}
Results from past few decades show that the
choice of the quantum analogue of classical random walk is highly non-unique.
Some of the most popular models include discrete coined quantum walk
\cite{aharonov2001quantum}, continuous quantum walk \cite{childs2002example},
Szegedy walk \cite{szegedy2004quantum}, open walk \cite{attal2012open2},
staggered quantum walk \cite{portugal2016staggered}, and quantum stochastic walk
\cite{whitfield2010quantum}. They have found applications in designing PageRank
algorithm \cite{loke2017comparing,sanchez2011quantum,paparo2014quantum}, search
algorithms \cite{childs2004spatial,childs2003exponential,shenvi2003quantum, glos2017vertices},
solving triangle problem \cite{magniez2007quantum}, and describing chemical
reactions \cite{chia2016coherent}. All of them outperform their classical
counterpart at least for some large class of graphs. From the algorithmic
perspective, it is crucial to understand the limiting behaviour of the walks.
This includes the propagation speed, which decides about efficiency of the
search algorithm \cite{domino2017spontaneous}, mixing time and relaxing, which
find applications in PageRank algorithm
\cite{sanchez2011quantum,liu2016continuous,nayak2000quantum}, limit theorem
\cite{konno2005new,sadowski2016central}, and trapping \cite{sadowski2016lively}.

A special kind of continuous quantum evolution is quantum stochastic walk, which
generalizes both classical and quantum walk \cite{whitfield2010quantum}. The
model has been investigated in the context of relaxing property
\cite{sanchez2011quantum,liu2016continuous}, propagation speed
\cite{domino2016properties}, and applications to various physics
\cite{chia2016coherent} and computer science \cite{sanchez2011quantum} problems.
The difference comparing to original continuous-time quantum walk is the
Lindblad operators appearance. The choice of Lindblad operators is highly
nontrivial, and in the case were each one-dimensional subspace corresponds to
different vertex, two model are of particular interest: local environment
interaction and global environment
interaction~\cite{whitfield2010quantum,domino2016properties}.

Local environment interaction case has been extensively used and analysed. The model has been analysed in context of relaxation
\cite{liu2016continuous} and application in PageRank algorithm
\cite{sanchez2011quantum}. In particular, it was show that in the case of
undirected graphs, the QSW is always relaxing \cite{liu2016continuous}. However,
the evolution is decohering, and hence it destroys the ballistic propagation
\cite{bringuier2016central}.

The ballistic propagation was one of the basic motivation for the analysis of quantum walk. Fortunately, the global environment interaction QSW is proved to have
ballistic propagation~\cite{domino2016properties}. However, the global environment
interaction suffers from graph topology change~\cite{domino2017spontaneous}, due
to its character called \emph{moralization}. Thanks to the correction scheme it is possible
to bound the QSW with the global interaction to the digraph structure. Such
evolution, called non-moralizing global environment QSW, is verified numerically
to have at least superdiffusive propagation~\cite{domino2017spontaneous}.

The main contribution of this paper is the description of the limiting properties
of various models of quantum stochastic walks. We describe how the
connectivity of the graph influences the convergence or the relaxation. Numerical analysis suggests, at least
in some cases, that the convergence/relaxation appears on all graphs in the sense
of Erd\H{o}s-Renyi random graph model $\mathcal G(n,p)$. Furthermore our results includes the directed graph preservation analysis for directed graphs in the case of non-moralizing evolution.
The main results are collected in table~\ref{tab:results}

\begin{table}\centering
	\begin{tabular}{lp{3.7cm}p{3.7cm}p{3.7cm}}
		\hline
		& local interaction & global interaction & non-moralizing global interaction \\ \hline\hline
		convergence		 & -- unknown in general case&  -- undirected graphs \newline -- counterexample for digraphs  & -- counterexample for undirected graphs\\\hline
		relaxing  & -- strongly connected digraphs \newline
		-- weakly connected digraphs with one sink in condensation graph  & -- never for undirected graphs \newline -- counterexample for digraphs   &  -- counterexample for undirected graphs \\\hline
		
	\end{tabular}
	\caption{Main analytical results. Case $\omega\in (0,1]$ is only considered here.}
	\label{tab:results}
\end{table}

The paper is organized as follows. In the following part of this section we
provide some basic definitions concerning graph theory and QSW. In Sec.~\ref{sec:local} we analyse the limiting properties of the local
environment interaction case. In Sec.~\ref{sec:global} and
Sec.~\ref{sec:corrected-global} we analyse the limiting properties of the global
environment interaction case, the standard and the non-moralizing respectively. 
We conclude our
results in Sec.~\ref{sec:conclusions}.

\subsection{Graph theory terminology} Let $G=(V,E)$ be a digraph with vertex set
$V$ and arc set $E$. The underlying graph $G^u(V,E^u)$ is an undirected simple
graph, for which every arc from $G$ is replaced with an edge. We say that digraph is
weakly connected if its underlying graph is  connected. We say that digraph is
strongly connected, if for arbitrary $v,w\in V$ there is a directed path from
$v$ to $w$. The strongly connected components are the maximal strongly connected
subgraphs. We call $v$ a sink vertex, if its outdegree (number of outgoing arcs)
is zero. We denote $S(G)$ to be a set of sink vertices.

Let $\mathbb G = (\mathbb V, \mathbb E)$ be condensation of $G=(V,E)$, \ie{} a
directed graph constructed as follows. We make a partition $\mathbb V$ of vertex
set $V$ in such a way, that each block forms a maximal strongly connected
component in $G$. Then $(\mathbf v, \mathbf w)\in \mathbb E$ iff there exists
$v\in\mathbf v, w\in\mathbf w$ such that $(v,w)\in E$. In other words there is an
arc from one maximal strongly connected component to the other, if there is at
least one arc (in consistent direction) between their elements. Note that
$\mathbb G$ is directed acyclic graph, hence $S(\mathbb G)\neq\emptyset$.
Furthermore, if $G$ is weakly connected, then for each $\mathbf w\in \mathbb V$
there exists $\mathbf s\in  S(\mathbb G)$ such that there exists directed
path from $\mathbf w$ to $\mathbf s$.

The directed moral graph $G_M=(V,E_M)$ of the directed graph $G=(V,E)$ is defined as
follows. For each $v,v'\in V$ we have $(v,v')\in E_M$ iff $(v,v')\in E$ or there
exists $w\in V$ such that $(v,w),(v' ,w)\in E$. In other words, directed moral graph is
constructed by adding edge between the vertices which have common child. Note
that original moral graph is defined as underlying graph of the directed moral graph~\cite{Cowell1999, StudenyPhD}. 

In this work we are using Erd\H{o}s-R\'enyi model of random graphs.
The Erd\H{o}s-R\'enyi random graph $\mathcal{G}(n,p)$ is a graph with $n$
vertices such that any pair of vertices is connected with probability 
$p$~\cite{erdos1960evolution}.

Throughout this paper we will assume, that graph is at least weakly connected.

\subsection{GKSL master equation and quantum stochastic walks}  

To define quantum stochastic walks in general, let us start with the
Gorini-Kossakowski-Sudarshan-Lindblad (GKSL) master equation 
\cite{kossakowski1972quantum,
lindblad1976generators, gorini1976completely}
\begin{equation}
\frac{\textrm{d}}{\textrm{d}t}\varrho =\MM [\varrho]= -\ii [H, \varrho] + 
\sum_{L\in 
	\LL} \left(L \varrho 
L^\dagger - \frac12 \{L^\dagger L, \varrho\} \right), 
\label{eq:GKSL-master-equation}
\end{equation}
where $\{A, B\}$ is the anticommutator and $\MM$ is the evolution superoperator.
Here $H$ is the Hamiltonian, which describes the evolution of the closed system,
and $\LL$ is the collection of Lindblad operators, which describes the evolution
of the open system. This master equation describes general Markov continuous
evolution of mixed quantum states. Note, that in the case of $\LL=\emptyset$ and
$H$ being an adjacency matrix of some graph we recover the original continuous
quantum walk, however on mixed states.

The GKSL master equation was used for defining quantum stochastic walks (QSW),
which are a generalization of both classical random walks and quantum walks
\cite{whitfield2010quantum}. Both $H$ and $\LL$ correspond to the graph
structure, however one may verify that at least a choice of Lindblad operators
may be non-unique \cite{whitfield2010quantum, domino2016properties}. Suppose we
have a directed graph $G=(V,E)$. Since Hamiltonian $H$ needs to be hermitian, we
always choose adjacency matrix of the underlying graph graph. In the \emph{local
	environment interaction case} each Lindblad operator corresponds to a single
arc, $ \LL = \{c_{(v,w)}\ketbra{w}{v}\colon (v,w)\in E\}$ for $c_{(v,w)}\in
\C_{\neq 0}$. In the \emph{global environment interaction case} we choose a
single Lindblad operator, which is adjacency matrix of directed graph.

To analyse the impact of Lindbladian part, we add the smoothing parameter 
$\omega\in[0,1]$
\begin{equation}
\frac{\textrm{d}}{\textrm{d}t}\varrho =-\ii (1-\omega) [H, \varrho] + 
\omega \sum_{L\in 
	\LL} \left(L \varrho 
L^\dagger - \frac12 \{L^\dagger L, \varrho\} \right), 
\label{eq:GKSL-omega}
\end{equation}
In \cite{bringuier2016central} it was shown, by infinite path graph analysis,
that the local interaction case leads to the classical propagation of the walk.
Oppositely, in the case of the global interaction, the ballistic propagation is
obtained for arbitrary middle value $\omega$. Since for $\omega=0$ we recover
continuous quantum walk (CQW) on closed system, we are particularly interested
in $\omega \in(0,1]$ case.

In \cite{domino2017spontaneous} it has been notes that the global interaction QSW 
suffers for the graph topology change and the resulting
process fails to reproduce the structure of the original graph. 
The resulting graph, according to which the system is evolving, is the directed moral graph ~\cite{Cowell1999, StudenyPhD}, of the original one.
This effect is called a spontaneous moralization and to prevent this 
a correction scheme based on the system enlargement has been proposed~\cite{domino2017spontaneous}. The scheme consist of following steps:
first we combine with each vertex a subspace of the system of dimension
equal to the indegree of the vertex (in the case of source vertices the subspace
is onedimensional). Next, we choose a family of orthogonal matrices for
new Lindblad operator $\tilde L$, which destroys the spontaneous moralization.
The last step is to add a Hamiltonian $\tilde H_{\mathrm{local}}$ acting
locally on the subspaces corresponding to different vertices. The Hamiltonian
$\tilde H$ corresponding to the graphs structure is a 0-1 matrix, for which zero
values coincides with zero values of $\tilde L$. For details we refer the reader
to the original paper. Since in this model we combine each vertex with some
orthogonal subspaces, a natural measurement is the collection of operations
which projects the state onto the subspaces corresponding to the vertices.
We call this model of evolution as \emph{non-moralizing global environment 
interaction QSW}, or simply non-moralizing QSW. Similarly to Eq.~\eqref{eq:GKSL-omega}
we add smoothing parameter $\omega$ and the evolution takes the form
\begin{equation}
\frac{\textrm{d}}{\textrm{d}t}\varrho =-\ii (1-\omega) [\tilde H, \varrho] + 
\omega  \left(-\ii[\tilde H_{\mathrm{local}}, \varrho]+ \tilde L \varrho 
\tilde L^\dagger - \frac12 \{\tilde L^\dagger \tilde L, \varrho\} \right).  
\end{equation}

Numerical simulation of the QSW with non-moralizing global environment interaction is difficult because of enlarging of the system. With the increase of the input graph density, the size and density of output Lindblad operator $\tilde L$ increases rapidly.

Throughout this paper we analyse the limit behaviour of QSW in all of three
mentioned cases: local environment interaction, global environment interaction,
and corrected non-moralizing environment interaction. We analyse the evolution
in context of convergence and relaxation. We say that evolution is convergent, if
for arbitrary initial state $\varrho_0$ there exists stationary state
$\varrho_\infty$ such that $\varrho_t\xrightarrow{t\to\infty}\varrho_\infty$. We
say that evolution is relaxing, if there exists unique stationary state. In GKSL master equation uniqueness of stationary state is equivalent to
relaxing property, see Theorem 1 from \cite{schirmer_stabilizing_2010}.
Similarly one can define convergence and relaxing in the context of probability
distribution of quantum measurement. Note that relaxation implies convergence, but the opposite does not hold in general.

In the case of QSW $H$ and $\LL$ do not depend on time. Henceforth, we can solve
the differential equation analytically: if we choose initial state $\varrho_0$,
then
\begin{equation}
\vecc{\varrho_t} =\exp(tF) \vecc{\varrho_0}, \label{eq:evolution}
\end{equation}
where
\begin{equation}
F = -\ii \left(H \kron \Id - \Id \kron \con H \right) + 
\sum_{L\in\LL} \left ( L \kron \con L - \frac{1}{2}\left( L^\dagger L \kron 
\Id 
+ \Id \kron L^\dagger  L\right )\right 
),
\end{equation}
and $\vecc{\,\cdot\,}$ denotes the vectorization of the matrix (see eg.
\cite{miszczak2011singular}).
Note that the eigenvalues of $F$ implies the
behaviour of the evolution. If there exists purely imaginary nonzero
eigenvalues, the evolution is non-convergent for some initial state $\varrho_0$,
otherwise it is convergent. If the null-space is one-dimensional, then the
evolution is relaxing. Hence our numerical analysis is mostly based on analysing
the spectrum of $F$.

\section{Convergence of local interaction case} \label{sec:local}

The local environment interaction case is relaxing for undirected
graphs~\cite{liu2016continuous} and arbitrary Hamiltonian. The proof is based on
the Spohn theorem~\cite{Spohn1977}, which requires the self-adjointess of the set of Lindblad
operators, hence its applications is limited to the undirected graphs case.
Nevertheless we show, that the result can be extended to strongly connected
digraphs and weakly connected graphs with single sink vertex in $\mathbb G$ graph. Our proofs utilise the Condition 2. and Condition 3. from
\cite{schirmer_stabilizing_2010}, recalled here as Lemma~\ref{lemma:condition2} and \ref{lemma:condition3}. By interior we mean collection of density matrices with full rank.

\begin{lemma}[\cite{schirmer_stabilizing_2010}] \label{lemma:condition2}
Let $\Hl$ be a Hilbert space. If there is no proper subspace $S\subsetneq\Hl$, that is invariant under all Lindblad generators $L\in\LL$  then
the system has a unique steady state in the interior.
\end{lemma}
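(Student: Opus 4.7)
The plan is to establish the lemma in two stages: first that every steady state has full rank under the non-invariance hypothesis, then uniqueness as a short consequence.

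Existence of a steady state is standard: since $\{e^{t\MM}\}_{t\geq 0}$ is a one-parameter family of trace-preserving, positivity-preserving maps on the compact convex set of density matrices on $\Hl$, a fixed point $\varrho_\infty$ exists, for instance as a Cesaro limit $\tfrac1T\int_0^T e^{t\MM}[\varrho_0]\,\dd t$ extracted along a convergent subsequence. The crucial step is then to show that $\mathrm{supp}(\varrho_\infty)$ is invariant under every $L\in\LL$. I would pick $\ket\psi\in\ker\varrho_\infty$ and evaluate $\bra\psi\MM[\varrho_\infty]\ket\psi=0$. Since $\varrho_\infty\ket\psi=0$ and $\bra\psi\varrho_\infty=0$, the Hamiltonian commutator term drops and so does the anticommutator contribution $\tfrac12\bra\psi\{L^\dagger L,\varrho_\infty\}\ket\psi$, leaving
\begin{equation}
\sum_{L\in\LL}\bigl\|\varrho_\infty^{1/2}L^\dagger\ket\psi\bigr\|^2=0.
\end{equation}
Hence $L^\dagger\ket\psi\in\ker\varrho_\infty$ for every Lindblad operator, which translates by orthogonal complement to $L\,\mathrm{supp}(\varrho_\infty)\subseteq\mathrm{supp}(\varrho_\infty)$ for each $L\in\LL$. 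If $\mathrm{supp}(\varrho_\infty)$ were a proper subspace this would contradict the hypothesis, so $\varrho_\infty$ has full rank and lies in the interior.

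For uniqueness, suppose two distinct full-rank steady states $\varrho_1\neq\varrho_2$ exist, and set $\lambda=\min_{\ket v\neq 0}\tfrac{\bra v\varrho_1\ket v}{\bra v\varrho_2\ket v}$; this minimum exists and is positive since $\varrho_2$ has full rank. Let $\ket{v^*}$ attain it. Then $A:=\varrho_1-\lambda\varrho_2\geq 0$ and $\bra{v^*}A\ket{v^*}=0$, forcing $A\ket{v^*}=0$, so $A$ has a nontrivial kernel. Taking the trace gives $\tr A=1-\lambda\geq 0$, i.e.\ $\lambda\leq 1$; if $\lambda=1$ then $A\geq 0$ with zero trace would force $A=0$ and hence $\varrho_1=\varrho_2$, contradicting the assumption, so $\lambda<1$ and $A\neq 0$. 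By linearity $\MM[A]=0$, hence $A/\tr A$ is a steady state with proper support, contradicting the full-rank conclusion of the previous paragraph. Therefore the steady state is unique.

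The main obstacle I foresee is the support-invariance identity, i.e.\ correctly isolating the nonnegative jump terms from the Hamiltonian and anticommutator contributions when $\MM[\varrho_\infty]$ is evaluated against kernel vectors; this is a short but delicate algebraic manipulation in which the positivity of each jump term is essential. Once it is in place, both the full-rankness and the uniqueness reduce to elementary linear algebra on density matrices, and notably the argument never uses the specific form or self-adjointness of $\LL$, which is exactly what the authors need to bypass the Spohn-theorem limitation on the directed case.
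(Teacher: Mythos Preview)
The paper does not prove this lemma; it is quoted as Condition~2 from~\cite{schirmer_stabilizing_2010} and invoked as a black box in the proofs of Theorems~\ref{theorem:local-strongly} and~\ref{theorem:local-one-sink}. Your proposal, by contrast, supplies a self-contained argument, and it is correct. The support-invariance step is the heart of the matter: for $\ket\psi\in\ker\varrho_\infty$ the Hamiltonian and anticommutator contributions to $\bra\psi\MM[\varrho_\infty]\ket\psi$ vanish exactly as you say, leaving only the nonnegative jump terms, and the passage from $L^\dagger\ker\varrho_\infty\subseteq\ker\varrho_\infty$ to $L\,\mathrm{supp}(\varrho_\infty)\subseteq\mathrm{supp}(\varrho_\infty)$ is the standard adjoint/orthogonal-complement duality. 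The uniqueness step via $\lambda=\min_{\ket v\neq 0}\bra v\varrho_1\ket v/\bra v\varrho_2\ket v$ is a clean way to manufacture a rank-deficient steady state from two distinct full-rank ones and then contradict the first part.

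Since the paper offers no proof to compare against, the pertinent remark is that your argument makes explicit what the authors rely on only by citation: nothing in the reasoning requires $\LL$ to be self-adjoint, which is precisely the leverage over Spohn's theorem that the paper exploits for directed graphs.
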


\begin{lemma}[\cite{schirmer_stabilizing_2010}] \label{lemma:condition3}
	If there do not exists two orthogonal proper subspaces of $\Hl$ that are simultaneously invariant under all Lindblad generators $L\in\LL$, then the system has unique fixed point, either at the boundary or in the interior.
\end{lemma}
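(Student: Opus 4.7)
The plan is to prove the contrapositive: assuming the semigroup $e^{tF}$ admits two distinct fixed density matrices $\varrho_1\neq\varrho_2$, I will construct two orthogonal proper subspaces of $\Hl$ that are simultaneously invariant under every $L\in\LL$. Note that the GKSL generator on a finite-dimensional space always has at least one fixed state (by Brouwer's theorem applied to the compact convex set of density operators under the continuous channel $e^{tF}$), so nonuniqueness is the only case to treat.

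The first step is to reduce to the case of fixed states with orthogonal supports. Consider the Hermitian traceless operator $\Delta=\varrho_1-\varrho_2$ with Jordan decomposition $\Delta=\Delta_+-\Delta_-$, where $\Delta_\pm\geq 0$ have orthogonal supports. Since $e^{tF}$ is completely positive and trace preserving and annihilates $\Delta$, the images $e^{tF}(\Delta_\pm)$ are positive operators whose difference is $\Delta$ and whose trace norms satisfy $\|e^{tF}(\Delta_\pm)\|_1\leq \|\Delta_\pm\|_1$ with equality forced by $\Tr\Delta_+=\Tr\Delta_-$ and positivity preservation. Uniqueness of the Jordan decomposition then yields $e^{tF}(\Delta_\pm)=\Delta_\pm$ for all $t\geq 0$, so that after normalization I obtain two fixed density matrices $\sigma_\pm$ with $\mathrm{supp}(\sigma_+)\perp \mathrm{supp}(\sigma_-)$.

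The second step is to show that the support $S=\mathrm{supp}(\sigma)$ of any fixed state $\sigma$ is invariant under every Lindblad operator. Let $P$ be the orthogonal projector onto $S$ and $P^\perp=\Id-P$. Since $\sigma$ is fixed and has zero overlap with $P^\perp$, the quantity $\Tr\bigl(P^\perp e^{tF}(\sigma)\bigr)$ vanishes identically in $t$; differentiating at $t=0$ and using the explicit GKSL form~\eqref{eq:GKSL-master-equation} together with the cyclicity of trace gives
\begin{equation}
\sum_{L\in\LL}\Tr\bigl(P^\perp L\, \sigma\, L^\dagger\bigr)=0.
\end{equation}
Each summand is nonnegative, so $P^\perp L\sigma L^\dagger P^\perp=0$ for every $L$, and because $\sigma$ has full rank on $S$ this forces $P^\perp L P=0$, that is, $LS\subseteq S$. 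Applied to $\sigma_+$ and $\sigma_-$, this exhibits the two orthogonal proper invariant subspaces that contradict the hypothesis, completing the contrapositive.

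The main obstacle I expect is the first step, namely the preservation of the Jordan decomposition under the semigroup; one must argue carefully that $e^{tF}(\Delta_+)$ and $e^{tF}(\Delta_-)$ retain orthogonal supports for all $t\geq 0$ rather than only that their difference is invariant. A clean way to handle this is to work with the trace norm $\|\cdot\|_1$, which is contractive under any CPTP map, and to exploit the equality case $\|\Delta\|_1=\|\Delta_+\|_1+\|\Delta_-\|_1$; the equality case of trace-norm contractivity then pins down the decomposition, after which the invariance of supports follows as above.
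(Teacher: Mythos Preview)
The paper does not give its own proof of this lemma: it is quoted verbatim as Condition~3 from \cite{schirmer_stabilizing_2010} and used as a black box in the proof of Theorem~\ref{theorem:local-one-sink}. So there is no ``paper's proof'' to compare against; your proposal is in effect a reconstruction of the argument from the cited source.

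That said, your reconstruction is sound and is essentially the standard route. Two small remarks. First, in Step~2 the phrase ``differentiating at $t=0$'' is a red herring: since $\sigma$ is fixed you have $\MM[\sigma]=0$ outright, and the derivative of $\Tr(P^\perp e^{tF}\sigma)$ vanishes trivially. What you actually use is the identity $\MM[\sigma]=0$ projected to the $P^\perp$ block, together with $P^\perp\sigma=\sigma P^\perp=0$ to kill the Hamiltonian and anticommutator contributions; you might phrase it that way. Second, in Step~1 your trace--norm equality argument is correct but deserves one more line: from $\|e^{tF}\Delta_+-e^{tF}\Delta_-\|_1=\Tr e^{tF}\Delta_+ +\Tr e^{tF}\Delta_-$ one gets, via the spectral projection $P_+$ onto the positive part of $\Delta$, that $\Tr(P_+^{\,\perp}e^{tF}\Delta_+)=\Tr(P_+\,e^{tF}\Delta_-)=0$, whence positivity forces $e^{tF}\Delta_\pm$ to live on the original orthogonal supports and hence to coincide with $\Delta_\pm$ by uniqueness of the Jordan decomposition. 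With these clarifications the argument is complete.
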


Using the above lemmas we can prove the following.

\begin{theorem} \label{theorem:local-strongly}
Let $G=(V,E)$ be a strongly connected digraph and let $\LL=
\{L_{vw}=c_{(v,w)}\ketbra{w}{v}\colon (v,w)\in E\}$ for some $c_{(v,w)}\in \C_{\neq 0}$. Then
evolution described by Eq.~(\ref{eq:GKSL-master-equation}) is relaxing for
arbitrary Hamiltonian $H$ with stationary state in the
interior.
\end{theorem}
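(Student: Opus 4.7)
The plan is to apply Lemma~\ref{lemma:condition2} directly. It suffices to show that no proper subspace $S\subsetneq\HH=\C^{|V|}$ is invariant under every Lindblad generator $L_{vw}=c_{(v,w)}\ketbra{w}{v}$. Once this is established, Lemma~\ref{lemma:condition2} gives a unique steady state in the interior, and by the remark recalled earlier from \cite{schirmer_stabilizing_2010}, uniqueness of the stationary state is equivalent to relaxation for GKSL evolutions.

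So suppose for contradiction that such an invariant $S\subsetneq\HH$ exists, and pick a nonzero $\ket{\psi}\in S$. Since $\ket{\psi}\neq 0$, there is at least one vertex $v\in V$ with $\braket{v}{\psi}\neq 0$. For every out-neighbour $w$ of $v$ in $G$, invariance of $S$ under $L_{vw}$ gives
\begin{equation}
L_{vw}\ket{\psi} = c_{(v,w)}\braket{v}{\psi}\ket{w}\in S,
\end{equation}
and since $c_{(v,w)}\neq 0$ and $\braket{v}{\psi}\neq 0$, we conclude $\ket{w}\in S$. We can now iterate: whenever $\ket{u}\in S$ and $(u,u')\in E$, applying $L_{uu'}$ yields $\ket{u'}\in S$ (here we use again $c_{(u,u')}\neq 0$). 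Hence $S$ contains the computational-basis vector of every vertex reachable by a directed path starting at $v$. Because $G$ is strongly connected and has at least two vertices (the one-vertex case is trivial as $\HH$ is one-dimensional), every vertex is reachable from $v$; therefore $\{\ket{u}:u\in V\}\subseteq S$, i.e.\ $S=\HH$, contradicting properness.

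The only subtle point is the very first step: one must ensure that the starting vertex $v$ has outgoing arcs, so that at least one Lindblad operator $L_{vw}$ can be used to escape from $\ket{\psi}$. This is exactly guaranteed by strong connectivity for $|V|\geq 2$, since a sink vertex would prevent any directed path returning to it from leaving again. That is the single place where the hypothesis enters, and it is also what will fail in the weakly connected generalisation to be treated in the next theorem, where a more delicate argument based on the condensation graph and Lemma~\ref{lemma:condition3} is needed.
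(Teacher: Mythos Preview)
Your proof is correct and follows essentially the same approach as the paper: both apply Lemma~\ref{lemma:condition2} by taking a nonzero $\ket{\psi}$ in an invariant subspace $S$, picking $v$ with $\braket{v}{\psi}\neq 0$, and using strong connectivity together with successive applications of the Lindblad operators along directed paths to conclude that every basis vector $\ket{w}$ lies in $S$. The only cosmetic difference is that the paper composes the operators along a path in one shot, whereas you iterate one edge at a time after first extracting a basis vector; your closing paragraph is extra commentary rather than a needed step.
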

\begin{proof}

Let $\Hl$ be a
Hilbert space  spanned by $\{\ket v: v\in V\}$ and $S\neq \{0\}$ be arbitrary
subspace of $\Hl$ invariant under $\LL$. Furthermore, suppose that $\ket \psi\in S$ is a
nonzero vector and  $v\in V$ is such that $\braket{v}{\psi}\neq 0$.
Since $G$ is strongly connected, there is a directed path
$(v_1=v,v_2,\dots,v_k,w)$ for arbitrary $w\in V$. Then $c_w \ket w =L_{v_kw}
L_{v_{k-1}v_k}\cdots L_{v_2,v_1}\ket{\psi} \in S$ for some $c_w\in\C_{\neq
	0}$. Since $\{\ket{w} : w\in V \}$ forms a basis of $\Hl$, we have $S=\Hl$. By
Lemma~\ref{lemma:condition2} the theorem is true.
\end{proof}

\begin{theorem} \label{theorem:local-one-sink}
Let $G=(V,E)$ be a weakly connected digraph such that
$|S(\mathbb{G})|=|\{\mathbf s\}|=1$ and let $\LL= \{c_{(v,w)}\ketbra{w}{v}\colon
(v,w)\in E\}$ for some $c_{(v,w)}\in \C_{\neq 0}$. Then the evolution described by
Eq.~(\ref{eq:GKSL-master-equation}) is relaxing for arbitrary Hamiltonian $H$.
\end{theorem}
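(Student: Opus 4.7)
The plan is to invoke Lemma~\ref{lemma:condition3} rather than Lemma~\ref{lemma:condition2}, because when the graph fails to be strongly connected one expects the stationary state to be supported only on the unique sink component $\mathbf{s}$, and hence to lie on the boundary of the state space (so Lemma~\ref{lemma:condition2}, which forces interior support, cannot apply directly). Concretely, I will show that no two nonzero orthogonal proper subspaces of $\Hl$ are simultaneously invariant under $\LL$; uniqueness of the fixed point will then follow from Lemma~\ref{lemma:condition3}, and, as noted after Eq.~\eqref{eq:evolution}, uniqueness of the stationary state is equivalent to the relaxing property.

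The core argument goes as follows. Suppose for contradiction that $S_1,S_2\subsetneq\Hl$ are nonzero, orthogonal, and each invariant under every $L\in\LL$. Pick nonzero $\ket{\psi_i}\in S_i$ and a vertex $v_i\in V$ with $\braket{v_i}{\psi_i}\neq 0$. Because $\mathbb{G}$ has the unique sink $\mathbf{s}$ and $G$ is weakly connected, the structural observation recorded just before the subsection on moral graphs guarantees a directed path in $G$ from each $v_i$ to some $s_i\in\mathbf{s}$. Iterating the rank-one Lindblad operators $L_{vw}=c_{(v,w)}\ketbra{w}{v}$ along this path, exactly as in the proof of Theorem~\ref{theorem:local-strongly}, produces a nonzero multiple of $\ket{s_i}$ lying in $S_i$ (the overall scalar is a product of the nonzero coefficients $c_{(\cdot,\cdot)}$ and of $\braket{v_i}{\psi_i}$, so it does not vanish). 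Next, $\mathbf{s}$ induces a strongly connected subgraph of $G$, so there is a directed path inside $\mathbf{s}$ from $s_1$ to $s_2$. Applying the Lindblad operators along this second path to $\ket{s_1}\in S_1$ yields a nonzero multiple of $\ket{s_2}$ in $S_1$; but $\ket{s_2}\in S_2$ as well, contradicting $S_1\perp S_2$.

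I do not foresee a real obstacle here: the mechanism that drove Theorem~\ref{theorem:local-strongly} — the ability to transport any nonzero basis amplitude to any target basis vector by a sequence of Lindblad operators — is simply carried out in two stages, first from an arbitrary vertex into the unique sink component, then within that strongly connected component to any desired target. The hypothesis $|S(\mathbb{G})|=1$ is used exactly to force both reachable targets $s_1$ and $s_2$ into the same strongly connected set, which is what creates the contradiction with orthogonality. If there were two sinks, the second transport step would fail and the two accessible targets could lie in genuinely orthogonal invariant subspaces, which is consistent with the counterexamples to convergence/relaxation indicated in Table~\ref{tab:results} for more complicated condensation structures.
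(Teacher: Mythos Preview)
Your proposal is correct and follows essentially the same route as the paper: both arguments apply Lemma~\ref{lemma:condition3} by showing that any two nonzero $\LL$-invariant subspaces contain a common basis vector $\ket{w}$ with $w\in\mathbf{s}$, obtained by pushing an arbitrary nonzero amplitude along a directed path into the unique sink component. Your two-stage description (first into $\mathbf{s}$, then within $\mathbf{s}$ by strong connectivity to a common vertex) simply makes explicit what the paper compresses into the single assertion that both chains of Lindblad operators can be made to terminate at the same $w\in\mathbf{s}$.
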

\begin{proof}
Suppose $S_1\neq \{0\},S_2\neq \{0\}$ are two  subspaces of $\Hl$ and let
$\ket{\psi_1}\in S_1,\ket{\psi_2}\in S_2$. Similarly to method in
Theorem~\ref{theorem:local-strongly} one can show that there exist
$L_1^1,\dots,L_k^1\in \LL$ and $L_1^2,\dots,L_{k'}^2\in \LL$ such that
$c_w^1\ket{w} = L_k^1L_{k-1}^1\dots L_1^1 \ket{\psi_1}$ and $c_w^2\ket{w} =
L_{k'}^2L_{k'-1}^2\dots L_1^2 \ket{\psi_2}$ for some $w\in \mathbf s$ and
$c_w^1,c_w^2\in \C_{\neq 0}$. Hence $S_1$ and $S_2$ are not orthogonal and by
Lemma~\ref{lemma:condition3} the theorem holds.
\end{proof}
Note that no information about the graph structure needs to be encoded in the
Hamiltonian---the graph structure is encoded in the $\LL$ only.

The remaining class of weakly connected graphs is those for which $|S(\mathbb
G)|>1$. However in this case one can shown that Hamiltonian has impact on limiting
behaviour of the evolution. In the case of $\omega=1$, one can show that the
evolution is equivalent to the classical one. Because of that, there are different
stationary states in each of the sinks. Hence, the evolution is not relaxing.
However, due to de-cohering character of the evolution, it should be
convergent to the state from the sinks subspace.

Similarly convergence is expected in the case of $\omega\in(0,1)$. In
this case, if the Hamiltonian impact is sufficiently large, one can observe the
relaxation of the evolution.  As an example, we analyse a bidirected path with
vertex set $\{-n,-(n-1),\dots,n-1,n\}$, and edge set $E=\{(i,i+1):i<0\} \cup
\{(i-1,i):i>0\}$ (see Fig.~\ref{fig:bidirected-path}). The Hamiltonian is chosen to be
the adjacency matrix of the underlying graph, \ie{} the path graph. In
Fig.~\ref{fig:bidirected-path-threshold} the values of $\omega$ for which the evolution is relaxing are presented. One can see that for each $n$
there is $\omega_t$ such that for $\omega < \omega_t$ evolution is relaxing, and
for $\omega>\omega_t$ is not. As the value of $\omega_t$ decreased with $n$, one can observe that for larger graphs the stronger influence of the coherent part is necessary to ensure the relaxing property.


\begin{figure}\centering
 \begin{tikzpicture}[node distance=2cm]
\tikzset{nodeStyle/.style = {circle,draw,minimum size=3.5em}}

\node[nodeStyle] (A)  {$-n$};
\node[] (C) [right of=A] {$\dots$};
\node[nodeStyle] (D) [right of=C] {$-1$};
\node[nodeStyle] (E) [right of=D] {$0$};
\node[nodeStyle] (F) [right of=E] {$1$};
\node[] (G) [right of=F] {$\dots$};
\node[nodeStyle] (I) [right of=G] {$n$};

\tikzset{EdgeStyle/.style   = {->,>=latex}}
\draw[EdgeStyle] (C) to (A);
\draw[EdgeStyle] (D) to (C);
\draw[EdgeStyle] (E) to (D);
\draw[EdgeStyle] (E) to (F);
\draw[EdgeStyle] (F) to (G);
\draw[EdgeStyle] (G) to (I);

\end{tikzpicture}
\caption{Biderected path graph of size $2n+1$.}\label{fig:bidirected-path}
\end{figure}
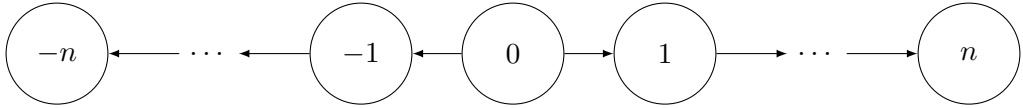

\begin{figure}\centering
	\includegraphics[width=0.7\textwidth]{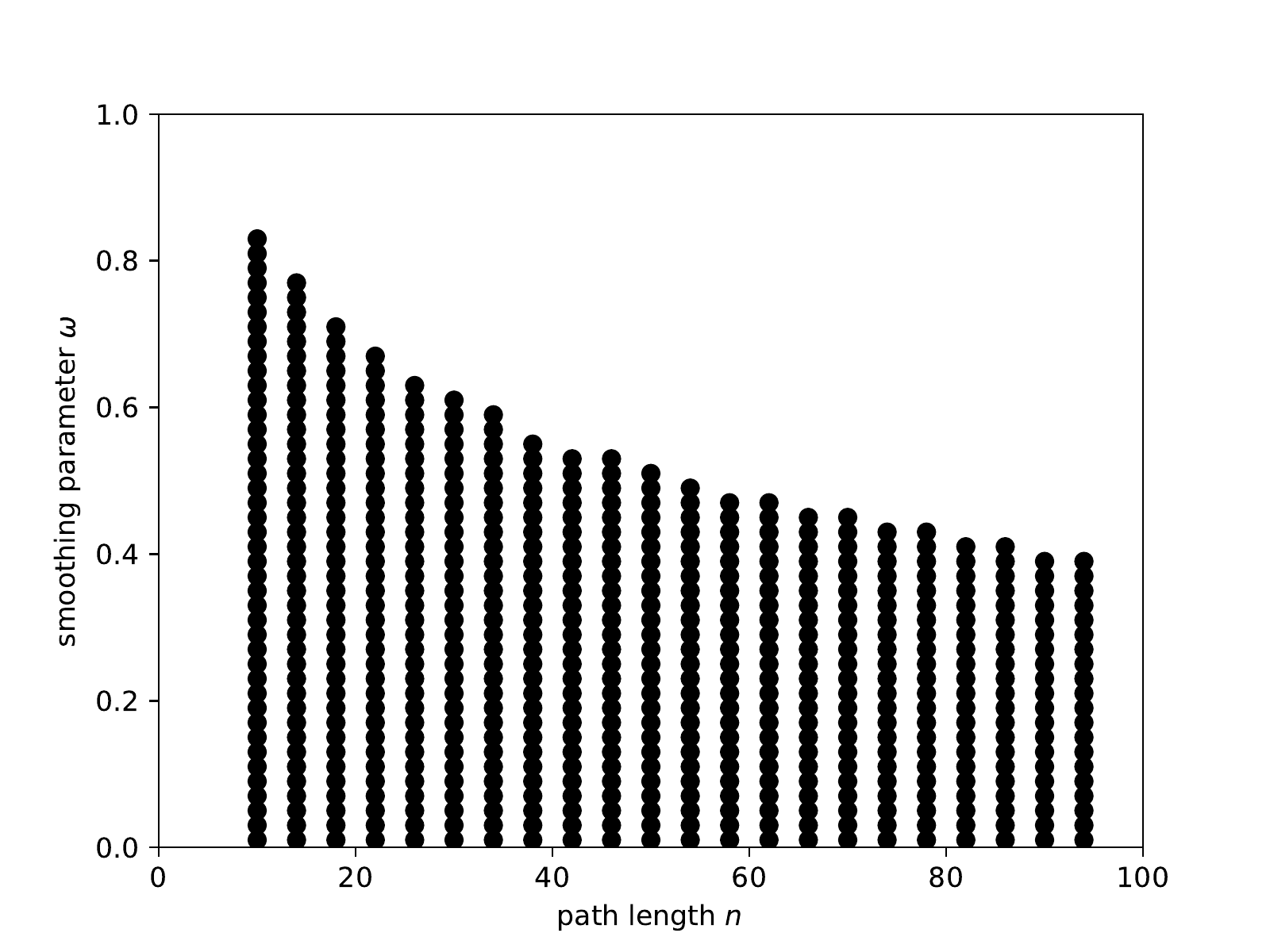}
	\caption{Values of $\omega$ for which the evolution on bidirected path graph is relaxing on local environment interaction evolution. Note that for each $n$ there is a threshold $\omega_t$, before which the evolution is relaxing. Numerical analysis was performed for $n=10,14,\dots,94$ and for $\omega=0.01,0.03,\dots,0.99$.
	}\label{fig:bidirected-path-threshold}
\end{figure}

We have analysed the graphs from Erd\H{o}s-R\'enyi model $\mathcal G(n,\frac{1}{10})$ with
$|S(\mathbb G)|>1$. For each size $n=10,15,\dots,45$ we have tested 200 graphs with values 
$\omega\in\{0.05,0.1,\dots,1\}$. We have ignored the graphs for which $|S(\mathbb G)|>1$ was not satisfied or which were not weakly connected. We have not found any example of graph for which the evolution was non-convergent. All graphs yield
convergent evolution for $\omega=1$.  Moreover, for all graphs it was possible to find such $\omega_t$, that for all $\omega\in(0,\omega_t)$ the evolution is relaxing, while for $(\omega_t,1]$ the evolution is convergent. 

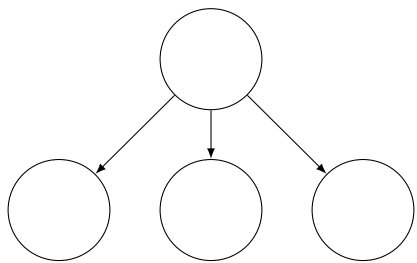
\begin{figure}
	\centering
 \begin{tikzpicture}[node distance=2cm]
\tikzset{nodeStyle/.style = {circle,draw,minimum size=3.5em}}

\node[nodeStyle] (A)  {};
\node[nodeStyle] (C) [below of=A] {};
\node[nodeStyle] (B) [left of=C] {};
\node[nodeStyle] (D) [right of=C] {};

\tikzset{EdgeStyle/.style   = {->,>=latex}}
\draw[EdgeStyle] (A) to (B);
\draw[EdgeStyle] (A) to (C);
\draw[EdgeStyle] (A) to (D);

\end{tikzpicture}
\caption{Star graph of size 4, which yields non-relaxing evolution for $\omega\in(0,1]$}\label{fig:multi-sink-nonrelaxing-classical}
\end{figure}

However, we were able to identify graphs which are non-relaxing for all $\omega\in(0,1)$. Let us consider a star graph in Fig.~\ref{fig:multi-sink-nonrelaxing-classical}. One can show, that for arbitrary $\omega\in(0,1]$ the evolution operator $F$ consists of at least 4 zero eigenvalues. Hence, the evolution for such graph is never relaxing. Star graphs of bigger size yield similar properties. On th other hand, we were not able to find any graph, which yields relaxing evolution for all $\omega\in(0,1)$.

\section{Convergence of global interaction case} \label{sec:global}

\subsection{Undirected graphs} We start this section with providing the general
result for the commuting operators. The result generalizes the case of quantum
stochastic walk with global interaction for all undirected graphs and for some
directed graphs~\cite{domino2016properties}, including circulant matrices.

\begin{proposition}\label{theorem:commuting}
Let us consider Eq.~\eqref{eq:evolution} in the case of commuting 
Lindbladian operators $\LL$ and Hamiltonian $H$. Then the evolution 
operation is of the form
\begin{equation}
(U\kron \bar U)\exp (tD_{F_{\omega}})(U\kron \bar U)^\dagger,
\end{equation}
where
\begin{equation}
D_{F_{\omega}}= -\ii (1-\omega) (D_H\kron \Id 
-\Id\kron 
 D_H) + 
\omega \sum_{L\in\LL}(D_L\kron \bar D_L - \frac{1}{2} \bar D_L D_L\kron 
\Id-\frac{1}{2} \Id 
\kron 
 D_L \bar  D_L)  \label{eq:diagonal-commuting}
\end{equation}
is a diagonal matrix. Here we assume that $U$ is a unitary operator and $D_H,D_L$ are
diagonal operators such that $H=UD_HU^\dagger$ and $L=UD_LU^\dagger$.
\end{proposition}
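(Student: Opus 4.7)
The plan is to reduce the statement to simultaneous diagonalization. By hypothesis the operators in $\LL\cup\{H\}$ commute pairwise; $H$ is hermitian and in the motivating applications each $L$ is normal (e.g.\ a symmetric adjacency matrix, or a circulant matrix). The spectral theorem for commuting normal operators then produces a single unitary $U$ such that $H=UD_HU^\dagger$ with $D_H$ real diagonal, and $L=UD_LU^\dagger$ with $D_L$ diagonal for every $L\in\LL$.

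Next I would compute the conjugate of $F_\omega$ by the unitary $V:=U\otimes\bar U$. The basic identity $V(A\otimes B)V^\dagger=UAU^\dagger\otimes\bar UB\bar U^\dagger$, combined with $\bar U\bar U^\dagger=\Id$ and the fact that $D_H$ is real so $\bar D_H=D_H$, yields
\begin{align*}
V(D_H\otimes\Id)V^\dagger&=H\otimes\Id, & V(\Id\otimes D_H)V^\dagger&=\Id\otimes\bar H,\\
V(D_L\otimes\bar D_L)V^\dagger&=L\otimes\bar L, & V(\bar D_LD_L\otimes\Id)V^\dagger&=L^\dagger L\otimes\Id,
\end{align*}
together with $V(\Id\otimes D_L\bar D_L)V^\dagger=\Id\otimes\overline{L^\dagger L}$. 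Each summand of $F_\omega$ is thus the $V$-conjugate of the corresponding summand of $D_{F_\omega}$, so adding them gives $F_\omega=VD_{F_\omega}V^\dagger$.

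Since $D_{F_\omega}$ is a linear combination of tensor products of diagonal matrices it is itself diagonal, and $\exp(tD_{F_\omega})$ is then obtained by exponentiating its entries. Because conjugation by the unitary $V$ intertwines with the analytic functional calculus, one concludes $\exp(tF_\omega)=V\exp(tD_{F_\omega})V^\dagger$, which is precisely the stated form of the evolution operator.

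The only substantive subtlety is the opening step: pairwise commutativity on its own does not produce a common unitary eigenbasis, and normality of each $L$ is also required. This is automatic in every case the proposition is meant to cover, and in a careful write-up I would make that hypothesis explicit rather than leave it absorbed into the word \emph{commuting}; the rest of the argument is pure bookkeeping with the vectorisation identities.
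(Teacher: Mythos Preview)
Your proposal is correct and follows the same approach as the paper, which simply invokes simultaneous diagonalization of the commuting operators and declares the rest immediate. Your write-up is more explicit in the tensor bookkeeping and, usefully, flags the need for normality of the Lindblad operators---a hypothesis the paper leaves implicit.
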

\begin{proof}
The proof comes directly from the eigendecompositions of the operators. Since
all operators commute, it is possible to find common eigendecomposition with the
same unitary matrix. By this we can easily find the result.
\end{proof}
One can note the global interaction quantum stochastic walk on undirected graphs
is a special case of the evolution described in the above theorem. In the walk
model the difference comes from the size of $\LL$, where we choose only single
Lindbladian operator. Hence we prove a result concerning undirected graphs.

\begin{theorem} \label{theorem:undirected}
The stationary states in the GKSL master equation evolution for which
$\LL=\{H\}$  are precisely the stationary states of the pure continuous quantum
evolution.  The evolution is convergent for $\omega\in(0,1]$, but
not relaxing iff the system size is greater than one.
\end{theorem}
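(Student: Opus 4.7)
The plan is to apply Proposition~\ref{theorem:commuting}, which is available because $\LL=\{H\}$ commutes trivially with $H$. After diagonalizing $H=UD_H U^\dagger$ with $D_H=\mathrm{diag}(\lambda_1,\ldots,\lambda_n)$ real, I expect every assertion in the theorem to be read off the spectrum of the diagonal generator $D_{F_\omega}$ given in Eq.~\eqref{eq:diagonal-commuting}.

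The first computation is to show that, since $\bar D_H=D_H$, the $(j,k)$-th diagonal entry of $D_{F_\omega}$ collapses to
\begin{equation*}
-\ii(1-\omega)(\lambda_j-\lambda_k)-\frac{\omega}{2}(\lambda_j-\lambda_k)^2.
\end{equation*}
From this one reads that the real part is always non-positive and vanishes precisely when $\lambda_j=\lambda_k$, in which case the imaginary part also vanishes. Hence for $\omega\in(0,1]$ the generator $F_\omega$ has no nonzero purely imaginary eigenvalue, so by the criterion noted after Eq.~\eqref{eq:evolution} the evolution is convergent.

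For the characterization of stationary states and for relaxation, the next step is to analyse the kernel of $F_\omega$. In the diagonal basis this kernel is spanned by the computational-basis vectors indexed by pairs $(j,k)$ with $\lambda_j=\lambda_k$; undoing the vectorization identifies it with the commutant of $H$, and intersecting with density matrices recovers exactly $\{\varrho:[H,\varrho]=0\}$, which is the stationary set of the pure Hamiltonian evolution $\dot\varrho=-\ii[H,\varrho]$. The kernel dimension equals $\sum_i d_i^2$, where $d_i$ is the multiplicity of the $i$-th distinct eigenvalue of $H$; since $d_i^2\geq d_i$ this is at least $n$. Consequently the kernel is one-dimensional iff $n=1$, which proves the iff-statement about relaxation: for $n>1$ the kernel has dimension at least $n\geq 2$ and several linearly independent stationary states coexist, while for $n=1$ the state space is trivial.

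I expect the main technical point to be the identification of the vectorized kernel with the commutant of $H$. This hinges on the correspondence $\ket{j}\bra{k}\leftrightarrow\ket{j}\otimes\ket{k}$ in the eigenbasis of $H$ together with the observation that an operator block-diagonal with respect to the eigenspaces of $H$ is precisely one that commutes with $H$. Once this bookkeeping is carried out everything else is an immediate reading of the diagonal spectrum, and the assumption that $G$ is undirected enters only to guarantee that $H$ is Hermitian, so that $L=H$ qualifies as a valid Lindblad operator and has real spectrum.
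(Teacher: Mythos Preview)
Your proposal is correct and follows essentially the same route as the paper: apply Proposition~\ref{theorem:commuting}, read off the diagonal entries $-\ii(1-\omega)(\lambda_j-\lambda_k)-\tfrac{\omega}{2}(\lambda_j-\lambda_k)^2$, and deduce convergence, the stationary-state characterization, and non-relaxation from this spectrum. Your identification of the kernel with the commutant of $H$ and the dimension count $\sum_i d_i^2\geq n$ are more explicit than the paper's argument, but the underlying idea is the same.
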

\begin{proof} By the model construction  we have $\LL={H}$. Hence the 
formula Eq.~\eqref{eq:diagonal-commuting} simplifies to
\begin{equation}
D_{F_{\omega}}=-\ii (1-\omega) (D\kron \Id 
-\Id\kron 
 D) + 
\omega (D \kron D - \frac{1}{2} D^2\kron 
\Id-\frac{1}{2} \Id 
\kron 
 D ^2). 
\end{equation}
Here we assume $H=UDU^\dagger$. Since $H$ is hermitian, operator $D$ is a
real-valued diagonal matrix. The diagonal entries of operator $D_{F_{\omega}}$
are eigenvalues which characterize the evolution. We have
\begin{equation}
\begin{split}
\bra {i,j} D_{F_{\omega}}\ket {i,j} &=  -\ii(1-\omega) (\bra i D \ket i - \bra
j D\ket j)+\omega ( \bra i D \ket i \bra j D \ket j  -
\frac{1}{2}(\bra{i}D^2\ket i+\bra{j}D^2\ket j)) =\\
&= -\ii(1-\omega) (\bra i D
\ket i - \bra j D\ket j)- \frac\omega 2 (\bra i D \ket i - \bra j D\ket j)^2.
\end{split}
\end{equation}

Note $-\ii (1-\omega ) (\bra i D \ket i - \bra j D \ket j)$ corresponds to purely Hamiltonian evolution, and hence to continuous quantum walk. Since 0-eigenvalues of $F_\omega$ correspond to 0-eigenvalues of $H$ Hamiltonian part of the system, which furthermore correspond to the stationary states of the continuous-time quantum walk, we obtained the first part of the theorem.

Note that there are no purely imaginary eigenvalues od $F_\omega$. Hence, we have that the evolution is convergent. Since the set of stationary states of continuous-time quantum walk of size $n$ has at least $n$ elements, we obtain that QSW with global interaction is never relaxing.
\end{proof}
Note that the result from the above theorem implies that we can generate the
stationary states from the continuous quantum walk by adding the same
Lindbladian operator. 

\begin{remark}
	Global interaction case QSW is convergent, but not relaxing for arbitrary undirected graph with number of vertices greater than one 1 and for arbitrary $\omega \in (0,1]$. Furthermore, the stationary states are precisely those from CQW.
\end{remark}

In the next section we show that Theorem~\ref{theorem:undirected} cannot be
generalized for directed graphs.

\subsection{Directed graphs} In this section we
provide an example of a directed graph for which we do not necessary obtain
a stationary state. It has been proven that the evolution converges for arbitrary
initial state iff all nonzero eigenvalues of $F_{\omega}$ have negative real
part \cite[Theorem 5.4]{rivas2012open}. We found an example of a digraph which do not satisfy
the condition, and provide an exemplary initial state which results in
non-convergent evolution. One should note, that it is possible (and more probable) 
to find non-convergent states.

\begin{theorem}
Let us take the evolution for which the only Lindbladian operator is an
adjacency matrix of the directed graph and the Hamiltonian is an adjacency
matrix of the underlying graph. Then there exists a directed graph $G$ and an
initial state $\varrho_0$ for which the evolution is non-convergent for an
arbitrary value of the smoothing parameter $\omega\in(0,1]$.
\end{theorem}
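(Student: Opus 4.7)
The plan is to invoke the spectral criterion \cite[Theorem 5.4]{rivas2012open} already quoted in the text: the evolution $\varrho_t=\exp(tF_\omega)\vecc{\varrho_0}$ is non-convergent for some initial state if and only if $F_\omega$ admits a nonzero purely imaginary eigenvalue. Thus it suffices to exhibit a single weakly connected digraph $G$ for which this spectral condition holds simultaneously at every $\omega\in(0,1]$, and then to read off an explicit state from a corresponding eigenspace.

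First I would restrict the search to small digraphs that are structurally asymmetric with respect to their underlying graphs, in particular three- or four-vertex digraphs containing a sink (or source) vertex $v$. The heuristic is that $L$ then has a nontrivial kernel on $\ket{v}$, so the dissipator cannot damp Hamiltonian-driven coherences between $v$ and its neighbours in $G^u$; these coherences are the natural candidates for supporting sustained oscillations. A brute-force scan over digraphs of this size should suffice to uncover a workable $G$.

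Second, after fixing such a candidate on $n$ vertices I would write
\begin{equation*}
F_\omega=-\ii(1-\omega)(H\kron\Id-\Id\kron\bar H)+\omega\bigl(L\kron\bar L-\tfrac12 L^\dagger L\kron\Id-\tfrac12\Id\kron L^\dagger L\bigr)
\end{equation*}
and analyse its spectrum as a function of $\omega$. The aim is to isolate a conjugate pair of eigenvalues $\pm\ii\mu(\omega)$ with $\mu(\omega)\neq 0$ for every $\omega\in(0,1]$. The cleanest way is to exhibit an $\omega$-independent invariant subspace $W\subset\C^{n^2}$ on which $F_\omega$ restricts to a small block whose imaginary spectrum can be read off from its characteristic polynomial. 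Given a right eigenvector $\vecc{X}$ with eigenvalue $\ii\mu(\omega)$, the initial state $\varrho_0=\varrho_\infty+\varepsilon(X+X^\dagger)$, for a fixed interior stationary state $\varrho_\infty$ and sufficiently small $\varepsilon>0$, then produces the trajectory $\varrho_t=\varrho_\infty+\varepsilon(\ee^{\ii\mu t}X+\ee^{-\ii\mu t}X^\dagger)$, which oscillates and hence fails to converge.

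The main obstacle is the uniformity in $\omega$. Eigenvalues of $F_\omega$ generically move with $\omega$, and one must rule out the possibility that the candidate eigenvalue leaves the imaginary axis or collides with $0$ at some interior value of $\omega\in(0,1]$; moreover a single $\varrho_0$ must excite the oscillatory mode throughout the whole interval. Both requirements are handled at once by fixing the invariant subspace $W$ before $\omega$ is introduced, so that the verification reduces to checking that the characteristic polynomial of the block $F_\omega|_W$ has purely imaginary nonzero roots for every $\omega\in(0,1]$, i.e.\ to a short polynomial computation; identifying the right graph so that this block actually exists is the delicate part of the argument.
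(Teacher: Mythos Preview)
Your proposal correctly isolates the spectral criterion and the need for an $\omega$-independent invariant block, but as written it is a plan rather than a proof: you never exhibit the graph, never compute the block, and the sink heuristic is only a guess. Even granting the heuristic, the dissipator does not simply vanish on coherences adjacent to a sink $v$: while $L\ket v=0$ and $L^\dagger L\ket v=0$, the term $\ketbra vw L^\dagger L$ is generically nonzero whenever $w$ has positive outdegree, so those coherences are coupled and (partially) damped. One can sometimes salvage an undamped linear combination, but that combination must then also be checked against the Hamiltonian commutator, and none of this is carried out. The ``brute-force scan'' is therefore the entire content of the argument, and it is left undone.

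The paper takes a structurally different route that makes the $\omega$-independent block automatic: it chooses $G$ so that $H$ and $L$ \emph{commute}. Concretely $G$ is a circulant digraph on $4k$ vertices ($k>1$), the bidirected cycle together with directed length-two jumps. Both $L$ and the underlying adjacency $H$ are circulant, hence simultaneously diagonalised by the Fourier vectors $\ket{C_j}$, and Proposition~\ref{theorem:commuting} gives every eigenvalue of $F_\omega$ in closed form as a function of $\omega$. One of them equals $2\ii(1-\omega)$ with the $\omega$-independent eigenvector $\ket{C_k}\overline{\ket{C_{2k}}}$, and the pure state $\varrho_0=\tfrac12(\ket{C_k}+\ket{C_{2k}})(\bra{C_k}+\bra{C_{2k}})$ then evolves periodically with period $\pi/(1-\omega)$. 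Commutation is precisely the mechanism that delivers the invariant block you were hoping to discover by search; it also yields a \emph{strongly connected} counterexample, a sharper statement than any sink-based construction could produce.
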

\begin{proof}
\begin{figure}
\centering
\begin{tikzpicture}

\def \n {8}
\def \radius {2.5cm}
\def \margin {8} 

\foreach \s in {7,...,0}
{
	\node[draw, circle] (\s) at ({360/\n * (\s - 1)}:\radius) {$\s$};
	\draw[<->, >=latex] ({360/\n * (\s - 1)+\margin}:\radius)
	arc ({360/\n * (\s - 1)+\margin}:{360/\n * (\s)-\margin}:\radius);
}
\draw [<-,>=latex] (0)  edge (2);
\draw [<-,>=latex] (1)  edge (3);
\draw [<-,>=latex] (2)  edge (4);
\draw [<-,>=latex] (3)  edge (5);
\draw [<-,>=latex] (4)  edge (6);
\draw [<-,>=latex] (5)  edge (7);
\draw [<-,>=latex] (6)  edge (0);
\draw [<-,>=latex] (7)  edge (1);

\end{tikzpicture} 
\caption{An example of strongly connected directed graph, for which the global interaction case evolution is not convergent.}\label{fig:example-small-directed}
\end{figure}
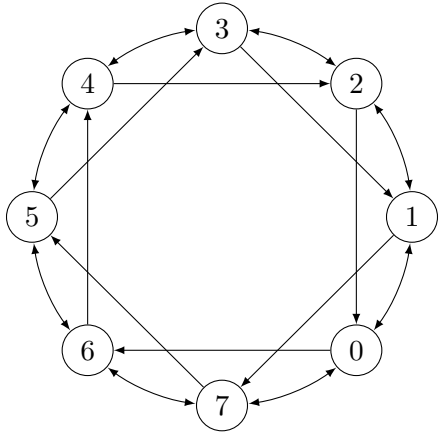
As an example we choose a circulant graph of size $4k$ for $k>1$ and with extra jump every two vertices. An example for $k=2$ is presented in
Fig.~\ref{fig:example-small-directed}. The graph and its underlying graph are
circulant matrices. Therefore, we can use Eq.~\eqref{eq:diagonal-commuting}
to find out that there exists one eigenvalue of the form $2(1-\omega)\ii$ with
corresponding eigenvector $\ket{C_{k}}\overline {\ket{C_{2k}}}$, where
$\overline{(\cdot)}$ denotes the element-wise conjugation and $\ket {C_i}$  is
the $i$-th eigenvector of a circulant matrix of the form
\begin{equation}
\ket {C_i}= \frac{1}{2\sqrt k}\sum_{j=0}^{4k} \exp \left(\frac{2\pi \ii 
ij}{4k-1}\right) \ket i.
\end{equation} 
We need to find a matrix which is not orthogonal to 
$\ketbra{C_k}{C_{2k}}$. Our exemplary initial state is 
\begin{equation}
\begin{split}
\varrho_0  &= \frac{1}{2}(\ket{C_k}+\ket{C_{2k}})(\bra{C_k}+\bra{C_{2k}}).
\end{split}
\end{equation}
The $\varrho_t$ takes the form
\begin{equation}
\varrho_t = \frac{1}{2}(\ketbra{C_k}{C_k} + \ketbra{C_{2k}}{C_{2k}} + 
e^{2\ii(1-\omega)t}\ketbra{C_k}{C_{2k}}+ e^{-2\ii(1-\omega)t}\ketbra{C_{2k}}{C_k}).
\end{equation}
Since $\varrho_t$ is periodic with period $\frac{\pi}{(1-\omega)}$, we obtain 
the result.
\end{proof}
Note, that for different $t$ we can obtain different state in the sense of
possible measurement output. For example we have $\bra 0 \varrho_0\ket 0 =
\frac{1}{2k}$, but at the same time we have $\bra 0 \varrho
(\frac{\pi}{2(1-\omega)}) \ket 0 = 0$. 

\begin{remark}
The evolution for which the only Lindbladian operator is an adjacency matrix of
the directed graph and the Hamiltonian is an adjacency matrix of the underlying
graph, does not converge in general, even in the sense of the canonical
measurement probability distribution.
\end{remark}

Circulant graphs provide an infinite collection of directed graphs for which the convergence does not hold. Note that the example used in the proof is strongly connected directed graph. This shows, that the convergence in the local interaction case does not imply the convergence in the global interaction case. 

Contrary, it is very difficult to find a directed graph for which the convergence does not hold. We have made numerical analysis for Erd\H{o}s-R\'enyi model $\mathcal{G}(n,0.5)$ for $n=10,15,\dots,45$ and for $\omega=0.05,0.1,\dots,1$. We have guaranteed that graph is weakly connected by ignoring such cases if they appear. For each $n$ we chose 200 graphs and we have not found any graph which is not convergent. At the same time it is possible to find relaxing evolution. We were not able to identify graph properties causing such behaviour. From all 1600 graphs, 1599 were relaxing for all value of $\omega$, and there was one graph of size 10 which was relaxing for $\omega\in\{0.05,\dots,0.95\}$ and convergent for $\omega= 1$. Therefore, relaxing property is statistically common for directed graphs in the global environment interaction case.

\section{Convergence of non-moralizing global interaction case}\label{sec:corrected-global}
\subsection{Non-convergent in the space state}

The non-moralizing model has been introduced in \cite{domino2017spontaneous} as a
provide an similar example of a graph and initial state such that it does not
converge. Similarly we found a digraph, for which an evolution operator
$S_{t,\omega}$ has an imaginary part. Again, it is easy to find a state for 
which
the evolution is non-convergent.

\begin{theorem} \label{th:global-nonmoralizing}
Let us take the non-moralizing evolution described in
\cite{domino2017spontaneous}. Then there exists a directed graph $G$ and initial
state $\varrho_0$ for which the evolution is periodic in time for an arbitrary
value of the smoothing parameter $\omega\in(0,1]$.
\end{theorem}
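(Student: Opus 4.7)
The plan is to mimic the strategy used in the proof of the preceding theorem on the global interaction case, but carried out on the enlarged Hilbert space prescribed by the correction scheme of \cite{domino2017spontaneous}. Concretely, I would pick a digraph $G$ whose structure is symmetric enough that the correction scheme produces operators admitting a common eigenbasis (or at least a common invariant two-dimensional subspace), so that the associated evolution generator $F_{\omega}$ can be analysed spectrally.

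First, I would choose $G$ to be a small circulant-style digraph for which $\tilde H$, $\tilde H_{\mathrm{local}}$ and $\tilde L^\dagger \tilde L$ decompose nicely under the translation symmetry of $G$, and for which the indegrees are all equal (so that every vertex gets a subspace of the same dimension and the orthogonal matrices defining $\tilde L$ can be chosen uniformly). A natural candidate is a regular circulant directed graph; since the whole construction respects the cyclic symmetry, the resulting $\tilde H$, $\tilde H_{\mathrm{local}}$ and $\tilde L$ commute with the block cyclic shift acting on the enlarged space, hence are simultaneously block-diagonalisable via a discrete Fourier transform. Next, I would exhibit, on one of these Fourier blocks, an eigenvalue of $F_{\omega}$ of the form $\ii \alpha (1-\omega)$ for some real $\alpha \neq 0$, together with an eigenvector $\vecc{X}$. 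Because the Lindbladian part contributes only non-positive real parts and the Hamiltonian part contributes purely imaginary values, the existence of such an eigenvalue reduces to checking that the imaginary contributions from $\tilde H$ and $\tilde H_{\mathrm{local}}$ on the chosen block do not cancel while the dissipative contribution vanishes; the cyclic symmetry should make this an explicit scalar computation.

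Once such an eigenpair is available, I would construct $\varrho_0$ exactly as in the proof of the preceding theorem: take a superposition of eigenvectors of $F_\omega$ with eigenvalues $0$ and $\pm \ii \alpha(1-\omega)$, chosen so that $\varrho_0$ is a valid (positive, trace-one) density matrix, for instance by symmetrising in a manner analogous to $\frac12 (\ket{C_k}+\ket{C_{2k}})(\bra{C_k}+\bra{C_{2k}})$. Then $\varrho_t$ is, by Eq.~\eqref{eq:evolution}, a sum of a stationary term and two complex conjugate oscillatory terms $\ee^{\pm \ii\alpha(1-\omega)t}$, hence periodic with period $\tfrac{2\pi}{\alpha(1-\omega)}$.

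The main obstacle I expect is the bookkeeping of the correction scheme on the enlarged space: one has to verify that with the symmetric choice of $G$ the orthogonal matrices in the definition of $\tilde L$ can be picked so as to preserve the cyclic symmetry, and that $\tilde H$ defined as the $0$-$1$ matrix whose zeros match those of $\tilde L$ still commutes with the shift. Once that is in place, the spectral analysis reduces to a small, block-by-block calculation and the rest of the argument is essentially the same as in the non-corrected global case. If the explicit circulant choice turns out to be too rigid to yield a purely imaginary eigenvalue, a fallback is to use a vertex-transitive digraph with a single non-trivial irreducible representation, where representation theory forces a Hamiltonian-only contribution in one isotypic component.
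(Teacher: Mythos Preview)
Your plan has a genuine gap at the endpoint $\omega=1$. You aim for a purely imaginary eigenvalue of $F_\omega$ of the form $\ii\alpha(1-\omega)$ and conclude a period $\tfrac{2\pi}{\alpha(1-\omega)}$. But the theorem asks for periodicity for every $\omega\in(0,1]$, and your candidate eigenvalue degenerates to $0$ at $\omega=1$, so the state becomes stationary rather than periodic there. The source of the problem is that you are trying to extract the oscillation from the graph Hamiltonian $\tilde H$, which in the non-moralizing model is scaled by $(1-\omega)$. You do mention $\tilde H_{\mathrm{local}}$ in passing, but the eigenvalue and period you write down ignore that its contribution carries the prefactor $\omega$, not $(1-\omega)$; any purely $(1-\omega)$-proportional imaginary eigenvalue cannot survive the whole interval.

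The paper's proof exploits precisely the piece you discard: the local rotating Hamiltonian $\tilde H_{\mathrm{local}}$, which is specific to the correction scheme and is multiplied by $\omega$. Instead of a circulant graph and a global symmetry argument, the authors take a small graph with one vertex $v_0$ of indegree $5$, so that $v_0$ alone spawns a five-dimensional subspace on the enlarged system. They then pick two eigenvectors of the rotating Hamiltonian supported entirely in this $v_0$-block and verify that on this subspace $\tilde H$ and the dissipative part $\tilde L\varrho\tilde L^\dagger-\tfrac12\{\tilde L^\dagger\tilde L,\varrho\}$ give no contribution, so the only surviving term is $-\ii\omega[\tilde H_{\mathrm{local}},\,\cdot\,]$. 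This yields eigenvalues $0,\pm 2\ii\sqrt{3}\,\omega$ of $F_\omega$, which are nonzero for all $\omega\in(0,1]$, and the period is $\tfrac{\pi}{\sqrt3\,\omega}$. Your circulant strategy could in principle be salvaged, but only if you redirect the analysis toward the $\omega$-scaled local Hamiltonian rather than the $(1-\omega)$-scaled graph Hamiltonian; otherwise the $\omega=1$ case will always escape.
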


\begin{proof}
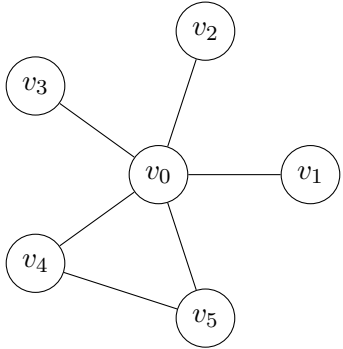
\begin{figure}
\centering
\begin{tikzpicture}
\node[draw,circle] (0) {$v_0$};
\def \n {5}
\def \radius {2cm}
\def \margin {8} 

\def \s {1}
\node[draw, circle] (\s) at ({360/\n * (\s - 1)}:\radius) {$v_\s$};
\def \s {2}
\node[draw, circle] (\s) at ({360/\n * (\s - 1)}:\radius) {$v_\s$};
\def \s {3}
\node[draw, circle] (\s) at ({360/\n * (\s - 1)}:\radius) {$v_3$};
\def \s {4}
\node[draw, circle] (\s) at ({360/\n * (\s - 1)}:\radius) {$v_4$};
\def \s {5}
\node[draw, circle] (\s) at ({360/\n * (\s - 1)}:\radius) {$v_\s$};

\draw [-,>=latex] (0)  edge (1);
\draw [-,>=latex] (0)  edge (2);
\draw [-,>=latex] (0)  edge (3);
\draw [-,>=latex] (0)  edge (4);
\draw [-,>=latex] (0)  edge (5);
\draw [-,>=latex] (5)  edge (4);
\end{tikzpicture} 
\caption{An example of graph for which non-moralizing global interaction evolution is not convergent.}\label{fig:example-big-directed}
\end{figure}

Let us take a graph presented in Fig.~\ref{fig:example-big-directed}. Using the
scheme presented in \cite{domino2017spontaneous}, new graph will consist of 5
copies of vertex $v_0$, two copies of vertices $v_4$ and $v_5$, and single copy of
other vertices. As the orthogonal matrices we choose the Fourier matrices and 
in order to remove the premature localization  we choose the rotating Hamiltonian $\tilde
H_{\textrm{rot}} $ constructed from the Hamiltonians of the form
\begin{equation}
 \begin{bmatrix}
0 & \ii &  &  \\
-\ii & \ddots & \ddots   & \\
& \ddots & \ddots & \ii & \\
&  & -\ii & 0 & \\
\end{bmatrix}.
\end{equation}

Let us choose two eigenvectors of the rotating Hamiltonian
\begin{gather}
\ket{\lambda ^1_{\tilde H_{\textrm{rot}}}} = \frac{1}{2\sqrt 3}\ket{v_0^0} 
-\frac{\ii}{2}\ket{v_0^1}- \frac{1}{\sqrt 3}\ket{v_0^2} + \frac{\ii}{2} 
\ket{v_0^3}+\frac{1}{2\sqrt 3}\ket{v_0^4},\\
\ket{\lambda^2_{\tilde H_{\textrm{rot}}}} = \frac{1}{2\sqrt 3}\ket{v_0^0} 
-\frac{\ii}{2}\ket{v_0^1}- \frac{1}{\sqrt 3}\ket{v_0^2} + \frac{\ii}{2} 
\ket{v_0^3}+\frac{1}{2\sqrt 3}\ket{v_0^4}.\\
\end{gather}
One can show that the vectors $\ket{\lambda ^1_{\tilde 
H_{\textrm{rot}}},\bar\lambda ^1_{\tilde H_{\textrm{rot}}}},\ket{\lambda 
^1_{\tilde 
H_{\textrm{rot}}},\bar\lambda ^2_{\tilde H_{\textrm{rot}}}},\ket{\lambda 
^2_{\tilde 
H_{\textrm{rot}}},\bar\lambda ^1_{\tilde H_{\textrm{rot}}}},\ket{\lambda 
^2_{\tilde 
H_{\textrm{rot}}},\bar\lambda ^2_{\tilde H_{\textrm{rot}}}}$ are the 
eigenvectors of the increased evolution operator $\tilde S_{t,\omega}$ for 
arbitrary 
$\omega\in(0,1]$. Corresponding eigenvalues are respectively $0,-2\ii\sqrt 
3\omega,2\ii\sqrt 3\omega,0 $. Similarly to the example presented in the 
previous 
section, the 
state
\begin{equation}
\tilde \varrho_0 = \frac{1}{2}(\ket{\lambda ^1_{\tilde 
H_{\textrm{rot}}}}+\ket{\lambda ^2_{\tilde 
H_{\textrm{rot}}}})(\bra{\lambda ^1_{\tilde 
H_{\textrm{rot}}}}+\bra{\lambda ^2_{\tilde 
H_{\textrm{rot}}}})
\end{equation}
is the required initial state. The state after time $t$ takes the form
\begin{equation}
\tilde \varrho_t = \frac{1}{2}(\ket{\lambda ^1_{\tilde 
H_{\textrm{rot}}}}\bra{\lambda ^1_{\tilde 
H_{\textrm{rot}}}}+e^{-2\ii t\sqrt{3}\omega}\ket{\lambda ^1_{\tilde 
H_{\textrm{rot}}}}\bra{\lambda ^2_{\tilde 
H_{\textrm{rot}}}}+e^{2\ii t\sqrt{3}\omega}\ket{\lambda ^2_{\tilde 
H_{\textrm{rot}}}}\bra{\lambda ^1_{\tilde 
H_{\textrm{rot}}}}+\ket{\lambda ^2_{\tilde 
H_{\textrm{rot}}}}\bra{\lambda ^2_{\tilde 
H_{\textrm{rot}}}}).
\end{equation}
The function $\tilde\varrho_t$ is periodic with period 
$\frac{\pi}{\sqrt3\omega}$, hence we obtained the result.
\end{proof}

\subsection{Convergence in the sense of the measurement}

The example from Theorem~\ref{th:global-nonmoralizing}, the probability
distribution obtained from the measurement in the canonical basis of the system
changes in time. However, when we analyse the canonical measurement from
\cite{domino2017spontaneous}, where as the measurement operators we choose the
projections onto the subspaces corresponding to different vertices, we can
observe that the probability does not change---in this case the probability
of measuring  vertex $v_0$ for each time point is one. 

We have performed numerical analysis for graphs of size $n=9$ from Erd\H{o}s-R\'enyi $\mathcal G(n,0.5)$ model for $\omega\in\{0,0.1,\ldots,1\}$. We have checked 200 random graphs and the non-moralizing QSW on each of them was convergent.

\begin{conjecture}
Let us choose the non-moralizing evolution model. Let $\Pi(\varrho_0,t)$ 
denotes 
the probability distribution of canonical 
measurement onto the subspaces of 
vertices in time $t$ with the initial state $\varrho_0$. Then for arbitrary 
$\varrho_0$ there exists probability distribution $\Pi_\infty$ such that 
\begin{equation}
\lim_{t\to\infty}\Pi(\varrho_0,t) = \Pi_\infty.
\end{equation}
\end{conjecture}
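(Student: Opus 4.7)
The plan is to analyze convergence of the measurement probability vector via the spectral structure of the vectorized generator $F_\omega$ of the non-moralizing evolution, isolating the oscillating modes of $\varrho_t$ and showing that they are invisible to the canonical vertex-subspace measurement.

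First, I would decompose $\vecc{\varrho_t} = \exp(tF_\omega)\vecc{\varrho_0}$ in the (generalized) eigenbasis of $F_\omega$. Writing the matrix eigenvectors as $\{X_k\}$ with eigenvalues $\{\mu_k\}$, and using the fact that $F_\omega$ generates a completely positive trace-preserving semigroup, so that $\mathrm{Re}(\mu_k)\le 0$, one obtains
\begin{equation}
\varrho_t = \sum_{\mathrm{Re}(\mu_k)<0} e^{\mu_k t}\,c_k(\varrho_0)\,X_k \;+\; \sum_{\mu_k=0} c_k(\varrho_0)\,X_k \;+\; \sum_{\mu_k=\ii\alpha_k,\,\alpha_k\neq 0} e^{\ii\alpha_k t}\,c_k(\varrho_0)\,X_k,
\end{equation}
up to polynomial-in-$t$ corrections from Jordan blocks which, by standard theory of quantum Markov semigroups, can only sit on the imaginary axis. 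The first sum decays, the second is the candidate limit, and only the third can spoil convergence. Consequently, convergence of $\Pi(\varrho_0,t)=(\Tr(P_v\varrho_t))_{v\in V}$, where $P_v$ is the projector onto the subspace attached to vertex $v$, reduces to proving $\Tr(P_v X_k)=0$ for every eigenmatrix $X_k$ with purely imaginary nonzero eigenvalue and every vertex $v$.

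The key structural lemma I would try to establish is that every such peripheral eigenmatrix $X_k$ is supported inside a single vertex block $P_v\Hl P_v$ of the enlarged Hilbert space and is traceless on that block. The mechanism, illustrated by the proof of Theorem~\ref{th:global-nonmoralizing}, is that the non-moralizing construction places the only source of nontrivial Hamiltonian rotation on the diagonal blocks via $\tilde H_{\mathrm{local}}$, while $\tilde H$ together with $\tilde L$ dissipate all coherences between distinct vertex subspaces. To formalize this I would invoke the Frigerio--Verri / Evans decoherence theorem for GKSL semigroups: the peripheral eigenmatrices form a finite-dimensional $*$-algebra contained in the commutant of $\{\tilde L,\tilde L^\dagger\tilde L\}$, on which the semigroup restricts to a group of $*$-automorphisms. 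A case analysis of the block structure of $\tilde L$--whose nonzero entries are dictated by the parent--child structure of the input digraph and by the chosen family of orthogonal matrices--should then show that this commutant algebra equals $\bigoplus_{v\in V}\Lrm(P_v\Hl)$, forcing every peripheral $X_k$ to be block-diagonal with respect to the vertex partition. Tracelessness on each block then follows because each $P_v$ is itself in the zero-eigenspace of $F_\omega$, hence orthogonal to any purely imaginary eigenmatrix in the Hilbert--Schmidt inner product.

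The main obstacle will be carrying out the commutant analysis in full generality, since the non-moralizing construction depends on the input digraph, on the chosen orthogonal matrices, and on the precise form of $\tilde H_{\mathrm{local}}$; in principle the commutant of $\tilde L$ alone could be strictly larger than the desired block algebra, so the argument must really use the interplay between $\tilde L$, $\tilde H$ and $\tilde H_{\mathrm{local}}$. If this direct spectral approach turns out to be too delicate, an alternative strategy is to work with the induced dynamics of $p_v(t)=\Tr(P_v\varrho_t)$ directly, by partially tracing out the internal degrees of freedom of each vertex and showing that the resulting effective map on the vertex-indexed probability vector has only real non-positive eigenvalues. The numerical evidence on $\mathcal G(9,0.5)$ supports the conjecture and is consistent with either route.
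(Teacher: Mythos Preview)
The paper does not prove this statement: it is explicitly labeled a \emph{Conjecture} and is supported only by numerical experiments on 200 graphs from $\mathcal G(9,0.5)$. There is therefore no proof in the paper to compare your proposal against; you are attempting something the authors left open.

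On its own merits, your outline is a reasonable line of attack but contains a concrete error and an acknowledged gap that together mean it is not yet a proof. The error is the claim that ``each $P_v$ is itself in the zero-eigenspace of $F_\omega$'': this would require $\MM[P_v]=0$, but the Hamiltonian $\tilde H$ has nonzero off-block entries (it is the $0$--$1$ matrix matching the support of $\tilde L$, which connects distinct vertex subspaces), so $[\tilde H,P_v]\neq 0$ in general and $P_v$ is not stationary. Consequently your Hilbert--Schmidt orthogonality argument for tracelessness of the peripheral $X_k$ on each block does not go through as stated; you would instead need to use that the \emph{left} zero-eigenvectors of $F_\omega$ include $\1$ (trace preservation) and argue from there, which only gives $\Tr X_k=0$ globally, not blockwise.

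The acknowledged gap---showing that the peripheral algebra sits inside $\bigoplus_v \Lrm(P_v\Hl)$---is the real heart of the matter, and your sketch does not establish it. The Frigerio--Verri machinery places the peripheral eigenmatrices in the commutant of $\{\tilde L,\tilde L^\dagger\}$ \emph{restricted to the support of a faithful stationary state}, and there is no guarantee here that such a state exists or that the commutant has the desired block form; indeed, the example in Theorem~\ref{th:global-nonmoralizing} shows oscillating modes living entirely inside one vertex block, which is consistent with your conjectured structure but does not follow from the commutant of $\tilde L$ alone. Until that structural lemma is proved in full generality (for arbitrary digraphs, arbitrary choices of orthogonal matrices, and arbitrary $\tilde H_{\mathrm{local}}$), the argument remains a heuristic, which is essentially where the paper itself leaves the question.
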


Note, that the probability distribution may be nonunique. To see that let us
analyse the graph presented in
Fig.~\ref{fig:big-evolution-different-measurements}. We choose
$\omega=\frac{1}{2}$ and two initial states
$\tilde\varrho_0=\ketbra{v_6^0}{v_6^0}$ and
$\tilde\varrho_1=\ketbra{v_7^0}{v_7^0}$. We have found the limiting probability
distribution
\begin{gather}
\Pi_0 \coloneqq\lim\limits_{t\to\infty}\Pi(\varrho_0,t), \\
\Pi_1 \coloneqq\lim\limits_{t\to\infty}\Pi(\varrho_1,t).
\end{gather}
The probability distributions differs, for example $\Pi_0(v_0) = 0.666616$ and 
$\Pi_1(v_0)=0.11897$.

\begin{figure}
\centering

 \begin{tikzpicture}[node distance=2cm]
  \tikzset{nodeStyle/.style = {circle,draw,minimum size=2.5em}}
  
  \node[nodeStyle] (A)  {$v_7$};
  \node[nodeStyle] (D) [below right of=A] {$v_2$};
  \node[nodeStyle] (B) [above right of=D] {$v_4$};
  \node[nodeStyle] (E) [below right of=B] {$v_1$};
  \node[nodeStyle] (C) [above right of=E] {$v_5$};
  \node[nodeStyle] (F) [below of=D] {$v_6$};
  \node[nodeStyle] (G) [below of=E] {$v_3$};

  \tikzset{EdgeStyle/.style   = {}}
  \draw[EdgeStyle] (A) to (D);
  \draw[EdgeStyle] (D) to (B);
  \draw[EdgeStyle] (B) to (E);
  \draw[EdgeStyle] (C) to (E);
  \draw[EdgeStyle] (D) to (F);
  \draw[EdgeStyle] (E) to (G);
  \draw[EdgeStyle] (D) to (E);
  \draw[EdgeStyle] (F) to (E);
  \draw[EdgeStyle] (D) to (G);
  
  \end{tikzpicture}

\caption{Graph for which there exists two different stationary states in the 
sense of the natural measurement. The states can be obtained by starting in 
vertices $v_7$ and $v_6$.}\label{fig:big-evolution-different-measurements}
\end{figure}
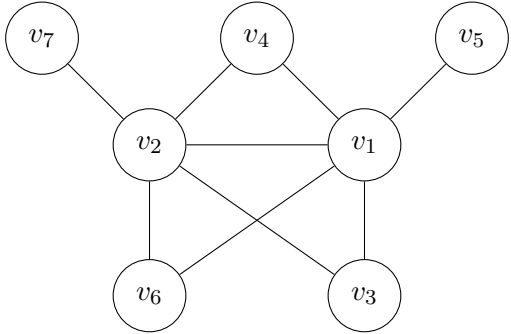

\subsection{The digraph structure observance} 
\label{sec:digraph-structure-observance}

In \cite{domino2017spontaneous} it was
suggested, that the non-moralizing global environment interaction QSW can be
applied for modelling the walk on directed graphs. Moreover, an example
suggesting that the original global interaction evolution, where Lindbladian
operator is an adjacency matrix of the directed graph, does not preserve the
digraph structure have been provided. Let us take
the graph from Fig.~\ref{fig:intuitive-graph} and let us analyse the graph
without the Hamiltonian. One can find that
\begin{equation}
\frac{1}{4}(\ket{v_1}-\ket{v_2})(\bra{v_1}-\bra{v_2}) + 
\frac{1}{2}\ketbra{v_3}{v_3}\label{eq:unintuitive}
\end{equation}
is proper stationary state. There is a nonzero probability of measuring the
state in vertex $v_1$ and $v_2$. Thus, one can see that the superoperator constructed
using a directed graphs can lead to the evolution on some other structure, namely the
directed moral graphs. This behaviour, expected in the case when the Hamiltonian part is present,
is surprising in the pure Lindbladian case.  

For the purpose of quantifying this behaviour we introduce the following property
of the evolution on directed graphs.

\begin{property}[Digraph structure observance]
	Let us assume that for an arbitrary vertex there is path to
	some sink vertex. We say that the evolution on a directed graphs 
	has \emph{digraph structure observance} property if an arbitrary 
	initial state converges to the state spanned by vectors corresponding
	to the sink vertices from the condensation of the graph.
\end{property}

We require that the evolution on directed graph, at least for Lindbladian part,
should have the digraph structure observance property.

One should note that digraph structure observance can be quantified for any directed
graph using additional sink vertices attached to the orginal vertices. 

In the case of the example in Fig.~\ref{fig:lindbladWalkGraphExample}
arbitrary state should converge to the state $\ketbra{v_3}{v_3}$.

\begin{figure}
\centering
\subfigure[\label{fig:intuitive-graph}]{\begin{minipage}[b]{.35\linewidth}
\centering
 \begin{tikzpicture}[node distance=2cm]
  \tikzset{nodeStyle/.style = {circle,draw,minimum size=2.5em}}
  
  \node[nodeStyle] (A)  {$v_1$};
  \node[nodeStyle] (C) [below right of=A] {$v_3$};
  \node[nodeStyle] (B) [above right of=C] {$v_2$};
  
  \tikzset{EdgeStyle/.style   = {->,>=latex}}
  \draw[EdgeStyle] (A) to (C);
  \draw[EdgeStyle] (B) to (C);
  
  \end{tikzpicture}
  
  \end{minipage}}
 \hspace{1cm}
\subfigure[\label{fig:true-graph}]{\begin{minipage}[b]{.35\linewidth}
\centering
 \begin{tikzpicture}[node distance=2cm]
  \tikzset{nodeStyle/.style = {circle,draw,minimum size=2.5em}}
  
  \node[nodeStyle] (A)  {$v_1$};
  \node[nodeStyle] (C) [below right of=A] {$v_3$};
  \node[nodeStyle] (B) [above right of=C] {$v_2$};
  
  \tikzset{EdgeStyle/.style   = {->,>=latex}}
  \draw[EdgeStyle] (A) to (C);
  \draw[EdgeStyle] (B) to (C);
  
  \tikzset{additionalEdgeStyle/.style = {dashed}}
  \draw[additionalEdgeStyle] (A) -- (B);
  \end{tikzpicture}

  \end{minipage}}
  
    \caption{Visualisation of a directed graph \subref{fig:intuitive-graph} 
    and 
    its spontaneous moralization
    \subref{fig:true-graph}.  
    }\label{fig:lindbladWalkGraphExample}
\end{figure}
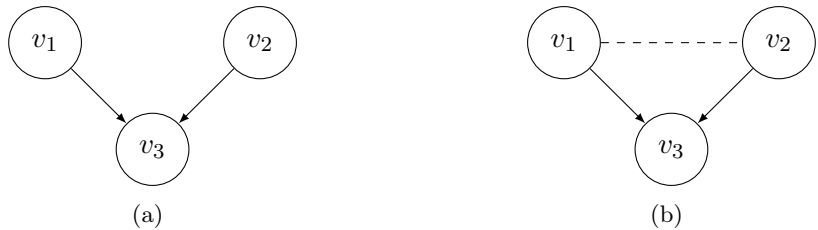

The unintuitive stationary state in Eq.~\eqref{eq:unintuitive} comes from the
spontaneous moralization of the graph \cite{domino2017spontaneous}. Since the
moralization of the graph was corrected, it is necessary to check whether the
improper stationary state still occurs in the non-moralizing evolution. This 
can be achieved by analysing the stationary states.

The numerical analysis was performed as follows. Let $S$ be a set of all sink
vertices. We start in some vertex with nonzero outdegree. We determine the state
$\varrho_\infty$ for large time value and we verified numerically, whether it is
close to stationary state in the sense of probability distribution of
measurement. Then we compute the cumulated probability of measuring the state in
the sink vertices
\begin{equation}
p_{S}(\varrho_\infty) = \sum_{v\in \bigcup S(\mathbb G)} \bra{v}\varrho_\infty\ket v
\end{equation}
and the second moment of the distance from the sink vertices
\begin{equation}
\mu_{S}(\varrho_\infty) = \sum_{v\in V} d^2(v,S) \bra v 
\varrho_\infty \ket v,
\end{equation}
where $d(v,S)$ is the length of the shortest path from $v$ to closest sink
vertex, \ie{} $d(v,S)=\min_{s\in \bigcup S(\mathbb G)}d(v,s)$. We say, that the
greater the value of $p_{S}$ and the lower the value of $\mu_{S}$ are, the more
the evolution preserves the graph.

We have analysed the evolution for
$\omega\in(0,1]$. For the purpose of our analysis we have selected four types of
graphs, namely: a path graph, the Petersen graph, the Apollonian graph, and the
Sierpinski triangle. We choose the orientation of the graphs such that each
vertex is either a sink vertex, or there is a path from it to some sink vertex.

The obtained results are presented in Fig.~\ref{fig:model-check}. One can see
that the larger the value of $\omega$ is, the more probability cumulates in the
close neighbourhood of the sink vertices. In these examples for
$\omega\in[0.7,1]$,  $\mu_\textrm{S}$ and $p_\textrm{S}$ respectively decreases
and grows with $\omega$. One can also notice that in the $\omega\to1$ limit the
$p_{\textrm{S}}$ converge to one and $\mu_\textrm{S}$ vanishes. The above
analysis suggests that when $\omega=1$, the directed graph structure is fully
preserved.

We have done further analysis for Erd\H{o}s-R\'enyi graphs $\mathcal G(n,0.2)$ with $n=9,11,12$, for each size 500 samples. In all of
these graphs for $\omega=1$ we have $p_S=1$ and $\mu_S=0$, which suggests that
at least for this extreme value of $\omega$ the digraph structure is observed.

Furthermore, we have searched for the $\omega_0$, for which for all
$\omega>\omega_0$ both measures $p_S$ and $\mu_S$ were monotonic in $\omega$. We were selecting random graphs $\mathcal G(n,0.2)$ with $n=9,11,12$ and next we were calculating $p_S$ and $\mu_S$ for $\omega$ from 1 to 0 with step $-0.02$. When $p_S$ started increasing or $\mu_S$ started decreasing we stopped the calculations and save $\omega_0$.
The statistics of the threshold $\omega_0$ is presented in
Fig.~\ref{fig:omega-treshold}. We have found no graph, for which $\omega_0>0.7$. This may suggest that for $\omega>0.7$ structure of the directed graph is preserved at least for given sizes of the digraphs.However, to provide more information about this behaviour the more detailed analysis is necessary.

\begin{figure}[h!]
	\includegraphics[scale=1]{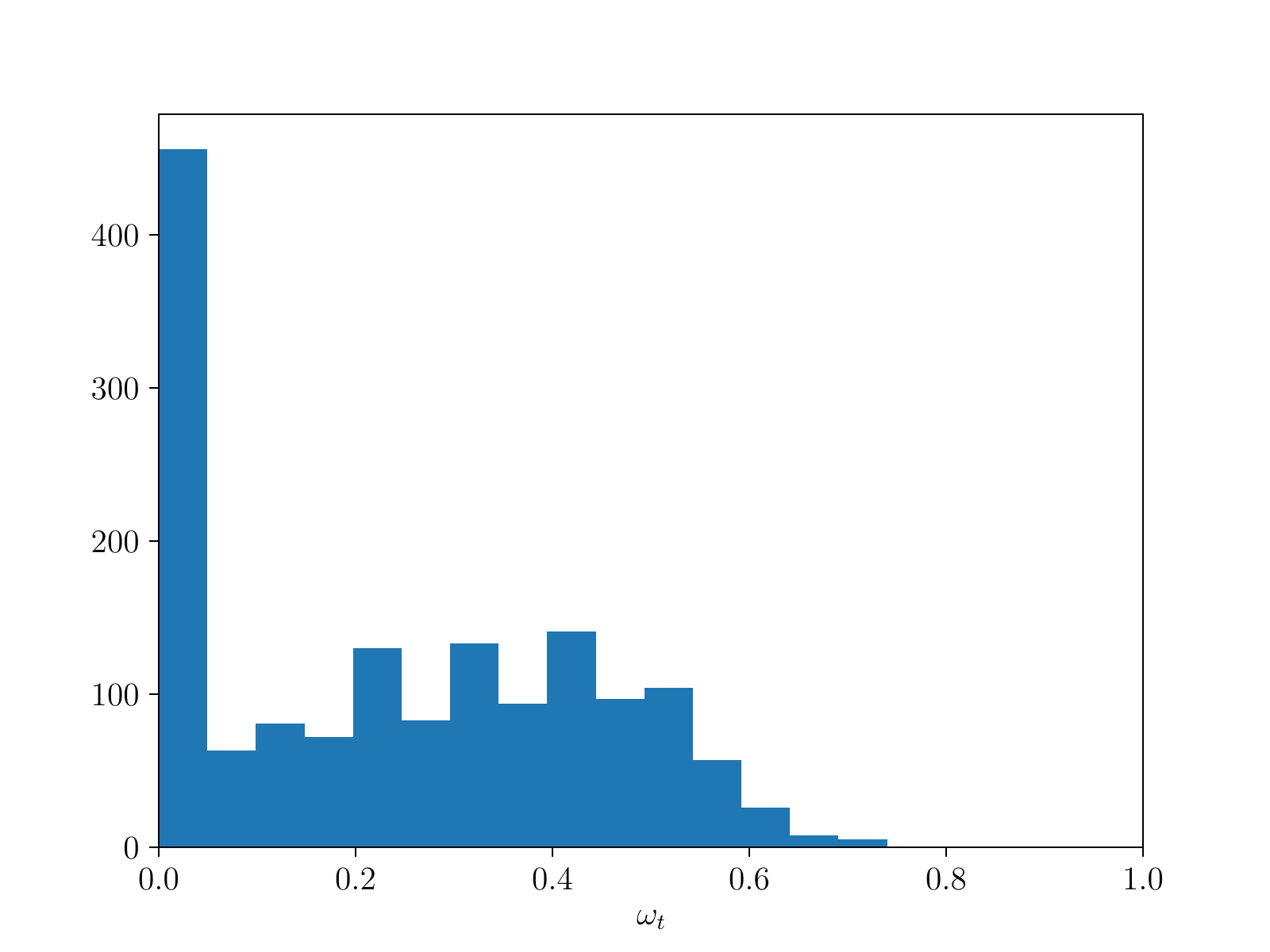}
	\caption{Histogram presents amount of graphs $\mathcal G(n,0.2)$ with $n=9,11,12$ having particular $\omega_0$.}
	\label{fig:omega-treshold}
\end{figure}


\begin{figure}
\centering
\includegraphics[width=0.8\textwidth]{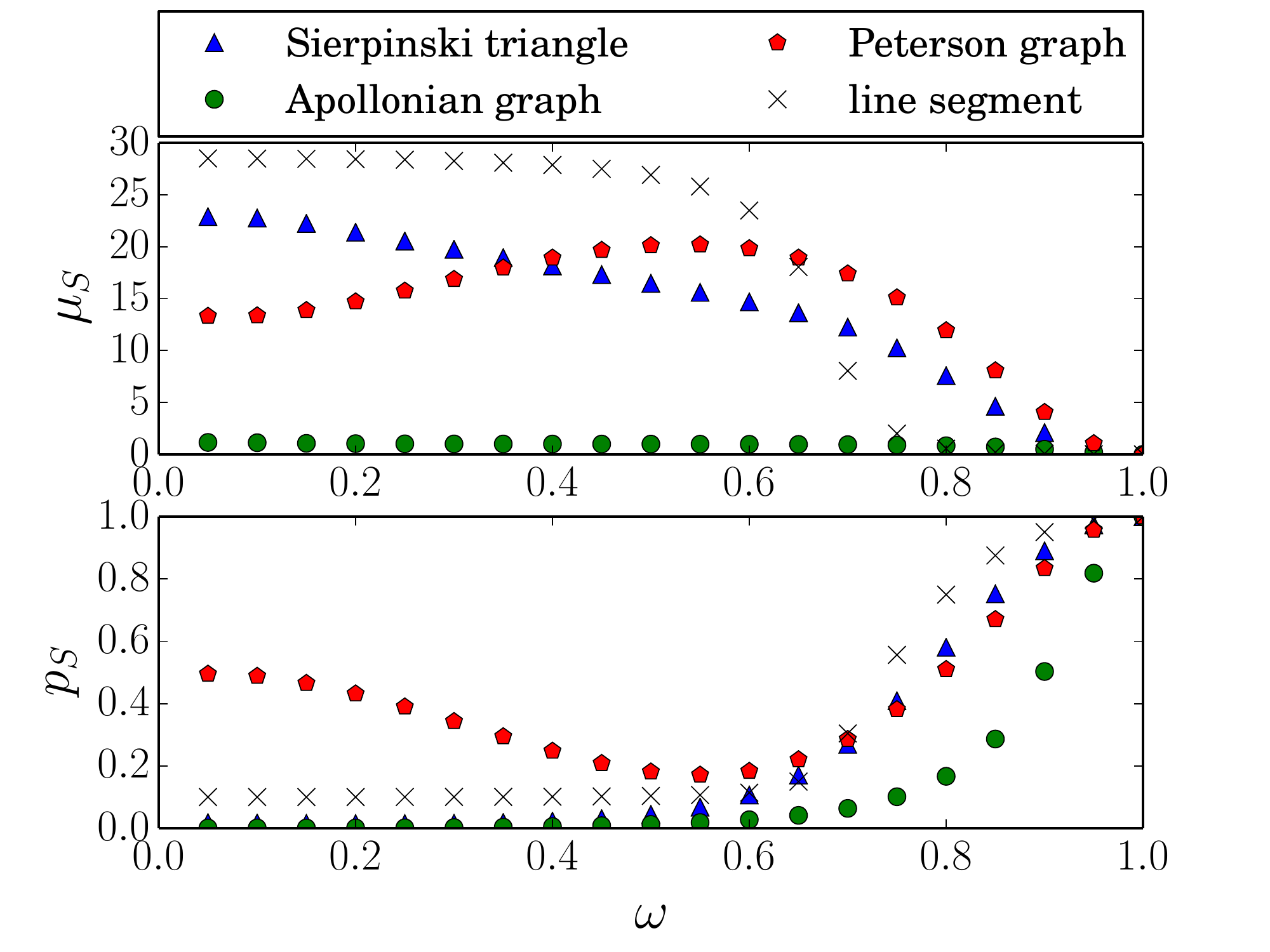}

    \caption{Top plot shows the second moment of the distance from the vertex to
	the closest sink vertex. Bottom plot shows the probability of measuring the
	only single sink vertex for large time graphs. Sierpinski triangle has 15
	vertices, Apollonian graph has 12 vertices and length of path graph equals
	10.
}\label{fig:model-check}
\end{figure}

%
%
%
\section{Concluding remarks}\label{sec:conclusions} In this article we analysed three cases of QSW:
with local interaction, with global interaction and with non-moralizing global
interaction. As for local interaction case our results generalize the results
from \cite{liu2016continuous}. We prove that the evolution is relaxing
in the case of undirected graphs. We also show that the result can be extended to
arbitrary strongly connected digraph and for weakly connected digraphs with one
sink in the condensation graph. Furthermore, we show that for strongly connected digraphs the
stationary state is located in the interior. At the same time we provide a counterexample demonstrating that the result cannot be extended to all directed graphs.

Global interaction case QSW is convergent, but never relaxing for arbitrary
undirected graph with number of vertices greater than one 1 and for arbitrary
$\omega \in (0,1]$. Furthermore the stationary states are precisely does from
CQW. We show by example, that the result cannot be extended into arbitrary
directed graphs. Surprisingly, applying the Hamiltonian may help relax the
evolution.

In the non-moralizing global interaction case we provide examples demonstrating
that the evolution does not need to be relaxing, or even convergent, even for undirected graphs.
We also  give an example of evolution, for which the evolution is not relaxing,
even in the sense of canonical measurement. However, we conjecture by numerical
analysis, that the evolution is convergent in the sense of the canonical measurement.

For the purpose of analysing digraph structure observance we have introduced sink observance property. This property can be quantified by
analysing the probability of measuring the state in the sink vertex and its
neighbourhood. We argue that the bigger the probability
of measuring the state in sink vertex or its closes neighbourhood, the better
the structure observance. 
Since the Lindblad operators corresponds to the directed graph structure, we
expect, that for $\omega$ close to $1$ the probability of measuring the sink
vertex is one. For the global interaction case the digraph structure is not
preserved, even for a very simple example. Fortunately, the digraph observance is
recovered in the non-moralizing global interaction case.

\section*{Acknowledgements} This work has been supported by the Polish National Science
Centre under project number 2011/03/D/ST6/00413. 
\bibliographystyle{ieeetr}
\bibliography{stochastic-stationary}

\begin{thebibliography}{10}

\bibitem{aharonov2001quantum}
D.~Aharonov, A.~Ambainis, J.~Kempe, and U.~Vazirani, ``Quantum walks on
  graphs,'' in {\em Proceedings of the thirty-third annual ACM symposium on
  Theory of computing}, pp.~50--59, ACM, 2001.

\bibitem{childs2002example}
A.~M. Childs, E.~Farhi, and S.~Gutmann, ``An example of the difference between
  quantum and classical random walks,'' {\em Quantum Information Processing},
  vol.~1, no.~1-2, pp.~35--43, 2002.

\bibitem{szegedy2004quantum}
M.~Szegedy, ``Quantum speed-up of markov chain based algorithms,'' in {\em
  Foundations of Computer Science, 2004. Proceedings. 45th Annual IEEE
  Symposium on}, pp.~32--41, IEEE, 2004.

\bibitem{attal2012open2}
S.~Attal, F.~Petruccione, C.~Sabot, and I.~Sinayskiy, ``Open quantum random
  walks,'' {\em Journal of Statistical Physics}, vol.~147, no.~4, pp.~832--852,
  2012.

\bibitem{portugal2016staggered}
R.~Portugal, R.~A. Santos, T.~D. Fernandes, and D.~N. Gon{\c{c}}alves, ``The
  staggered quantum walk model,'' {\em Quantum Information Processing},
  vol.~15, no.~1, pp.~85--101, 2016.

\bibitem{whitfield2010quantum}
J.~D. Whitfield, C.~A. Rodr{\'\i}guez-Rosario, and A.~Aspuru-Guzik, ``Quantum
  stochastic walks: A generalization of classical random walks and quantum
  walks,'' {\em Physical Review A}, vol.~81, no.~2, p.~022323, 2010.

\bibitem{loke2017comparing}
T.~Loke, J.~Tang, J.~Rodriguez, M.~Small, and J.~Wang, ``Comparing classical
  and quantum pageranks,'' {\em Quantum Information Processing}, vol.~16,
  no.~1, p.~25, 2017.

\bibitem{sanchez2011quantum}
E.~S{\'a}nchez-Burillo, J.~Duch, J.~G{\'o}mez-Garde{\~n}es, and D.~Zueco,
  ``Quantum navigation and ranking in complex networks.,'' {\em Scientific
  reports}, vol.~2, pp.~605--605, 2011.

\bibitem{paparo2014quantum}
G.~D. Paparo, M.~M{\"u}ller, F.~Comellas, and M.~Martin-Delgado, ``Quantum
  google algorithm,'' {\em The European Physical Journal Plus}, vol.~129,
  no.~7, p.~150, 2014.

\bibitem{childs2004spatial}
A.~M. Childs and J.~Goldstone, ``Spatial search by quantum walk,'' {\em
  Physical Review A}, vol.~70, no.~2, p.~022314, 2004.

\bibitem{childs2003exponential}
A.~M. Childs, R.~Cleve, E.~Deotto, E.~Farhi, S.~Gutmann, and D.~A. Spielman,
  ``Exponential algorithmic speedup by a quantum walk,'' in {\em Proceedings of
  the thirty-fifth annual ACM symposium on Theory of computing}, pp.~59--68,
  ACM, 2003.

\bibitem{shenvi2003quantum}
N.~Shenvi, J.~Kempe, and K.~B. Whaley, ``Quantum random-walk search
  algorithm,'' {\em Physical Review A}, vol.~67, no.~5, p.~052307, 2003.

\bibitem{glos2017vertices}
A.~Glos, R.~Kukulski, A.~Krawiec, and Z.~Pucha{\l}a, ``Vertices cannot be
  hidden from quantum spatial search for almost all random graphs,'' {\em arXiv
  preprint arXiv:1709.06829}, 2017.

\bibitem{magniez2007quantum}
F.~Magniez, M.~Santha, and M.~Szegedy, ``Quantum algorithms for the triangle
  problem,'' {\em SIAM Journal on Computing}, vol.~37, no.~2, pp.~413--424,
  2007.

\bibitem{chia2016coherent}
A.~Chia, A.~G{\'o}recka, T.~K. C, {\L}.~Pawela, K.~P, T.~Paterek, and
  D.~Kaszlikowski, ``Coherent chemical kinetics as quantum walks i: Reaction
  operators for radical pairs,'' {\em Physical Review E}, vol.~93, p.~032407,
  2016.

\bibitem{domino2017spontaneous}
K.~Domino, A.~Glos, and M.~Ostaszewski, ``Superdiffusive quantum stochastic
  walk definable on arbitrary directed graph,'' {\em Quantum Information \&
  Computation}, vol.~17, no.~11\&12, pp.~973--986, 2017.
\newblock arXiv:1701.04624.

\bibitem{liu2016continuous}
C.~Liu and R.~Balu, ``Continuous-time open quantum walks,'' {\em arXiv preprint
  arXiv:1604.05652}, 2016.

\bibitem{nayak2000quantum}
A.~Nayak and A.~Vishwanath, ``Quantum walk on the line,'' {\em arXiv preprint
  quant-ph/0010117}, 2000.

\bibitem{konno2005new}
N.~Konno, ``A new type of limit theorems for the one-dimensional quantum random
  walk,'' {\em Journal of the Mathematical Society of Japan}, vol.~57, no.~4,
  pp.~1179--1195, 2005.

\bibitem{sadowski2016central}
P.~Sadowski and {\L}.~Pawela, ``Central limit theorem for reducible and
  irreducible open quantum walks,'' {\em Quantum Information Processing},
  vol.~15, no.~7, pp.~2725--2743, 2016.

\bibitem{sadowski2016lively}
P.~Sadowski, J.~A. Miszczak, and M.~Ostaszewski, ``Lively quantum walks on
  cycles,'' {\em Journal of Physics A: Mathematical and Theoretical}, vol.~49,
  no.~37, p.~375302, 2016.

\bibitem{domino2016properties}
K.~Domino, A.~Glos, M.~Ostaszewski, {\L}.~Pawela, and P.~Sadowski, ``Properties
  of quantum stochastic walks from the {H}urst exponent,'' {\em arXiv preprint
  arXiv:1611.01349}, 2016.

\bibitem{bringuier2016central}
H.~Bringuier, ``Central limit theorem and large deviation principle for
  continuous time open quantum walks,'' {\em arXiv preprint arXiv:1610.01298},
  2016.

\bibitem{Cowell1999}
R.~G. Cowell, A.~P. Dawid, S.~L. Lauritzen, and D.~J. Spiegelhalter, {\em
  Building and Using Probabilistic Networks}, pp.~25--41.
\newblock New York, NY: Springer New York, 1999.

\bibitem{StudenyPhD}
M.~Studeny, {\em On mathematical description of probabilistic conditional
  independence structures}.
\newblock PhD thesis, 2001.

\bibitem{erdos1960evolution}
P.~Erd\H{o}s and A.~R{\'e}nyi, ``On the evolution of random graphs,'' {\em
  Publ. Math. Inst. Hung. Acad. Sci}, vol.~5, no.~1, pp.~17--60, 1960.

\bibitem{kossakowski1972quantum}
A.~Kossakowski, ``On quantum statistical mechanics of non-hamiltonian
  systems,'' {\em Reports on Mathematical Physics}, vol.~3, no.~4,
  pp.~247--274, 1972.

\bibitem{lindblad1976generators}
G.~Lindblad, ``On the generators of quantum dynamical semigroups,'' {\em
  Communications in Mathematical Physics}, vol.~48, no.~2, pp.~119--130, 1976.

\bibitem{gorini1976completely}
V.~Gorini, A.~Kossakowski, and E.~C.~G. Sudarshan, ``Completely positive
  dynamical semigroups of n-level systems,'' {\em Journal of Mathematical
  Physics}, vol.~17, no.~5, pp.~821--825, 1976.

\bibitem{schirmer_stabilizing_2010}
S.~G. Schirmer and X.~Wang, ``Stabilizing open quantum systems by {Markovian}
  reservoir engineering,'' {\em Physical Review A}, vol.~81, no.~6, p.~062306,
  2010.

\bibitem{miszczak2011singular}
J.~A. Miszczak, ``Singular value decomposition and matrix reorderings in
  quantum information theory,'' {\em International Journal of Modern Physics
  C}, vol.~22, no.~09, pp.~897--918, 2011.

\bibitem{Spohn1977}
H.~Spohn, ``An algebraic condition for the approach to equilibrium of an open
  n-level system,'' {\em Letters in Mathematical Physics}, vol.~2, pp.~33--38,
  Aug 1977.

\bibitem{rivas2012open}
A.~Rivas and S.~F. Huelga, {\em Open Quantum Systems}.
\newblock Springer, 2012.

\end{thebibliography}

\end{document}